\documentclass[11pt,leqno]{article}
\usepackage{amsmath,amssymb,amsfonts,graphicx}
\hoffset = -0.5 in \voffset = 0.1 in \setlength
 {\textwidth}{6.0
in} \setlength {\topmargin}{0 in} \setlength {\textheight}{8.6 in}

\newtheorem{theorem}{Theorem}[section]
\newtheorem{proposition}[theorem]{Proposition}

\newtheorem{lemma}[theorem]{Lemma}
\newtheorem{definition}[theorem]{Definition}
\newtheorem{remark}[theorem]{Remark}

\newtheorem{example}[theorem]{Example}

\newenvironment{proof}{\mbox{\bf Proof.}}{\mbox{$\dashv$}\bigskip}
\begin{document}
 \begin{center}
 {\Large\bf  Epistemic Learning Programs}\\
 A Calculus  for Describing  Epistemic Action Models\\

\vspace{.25in}
{\bf Mohammad Ardeshir and Rasoul Ramezanian}\\
Department of Mathematical Sciences,\\
Sharif University of Technology,\\
P. O. Box 11365-9415, Tehran, Iran\\
mardeshir@sharif.edu, ramezanian@sharif.edu
\end{center}
\begin{abstract}
\noindent Dynamic Epistemic Logic  makes it possible to model and
reason about  information change in  multi-agent systems.
\emph{Information change} is mathematically modeled through
epistemic action Kripke models introduced by Baltag et al. Also,
van Ditmarsch interprets the information change as a
\emph{relation} between epistemic states and sets of epistemic
states and to describe it formally, he considers a special
constructor $L_B$ called learning operator. Inspired by this, it
seems natural to us that the basic source  of information change
in a multi-agent system should be \emph{learning} an announcement
by some agents \emph{together}, \emph{privately},
\emph{concurrently} or even \emph{wrongly}. Hence moving along
this path, we introduce the notion of a \emph{learning program}
and prove that all finite $K45$ action models can be described by our
learning programs.
\end{abstract}

\section{Introduction}

A computable  function over   strings of a finite alphabet is a
function that can be computed by a Turing machine. A Turing
machine takes a  string as   input, performs a sequence of
elementary changes  on the string and if it halts, it provides
another string as  output of the function. In recursion theory
 all Turing computable function can be obtained via some
\emph{initial functions}: zero, successor, and projections through
applying some \emph{basic operations} such as composition,
primitive recursion and least search. In this paper, our goal is
to develop a similar methodology for a class of epistemic
functions. Following the same terminology, an epistemic function
is a function that takes the epistemic state of a multi-agent
system as input and yields  a new epistemic state as output. The
notion of epistemic function is the focus of Dynamic Epistemic
Logic, Baltag et al.~\cite{kn:bal1,kn:bal}, and is formalized in
action models. These functions act on Kripke models via an update
operator and produce an update Kripke model. In this paper, we
concentrate on those   epistemic functions which can be coded as
$K45$ action models.  $K45$ models are those models which
accessibly relations transitive and Euclidian. We claim that there
are possible information changes which are not possible to encode
them by $KD45$ or $S5$ action models. Consequently $K45$ action
models are more powerful to describe epistemic functions than
$KD45$ and $S5$ action models. It is why we consider $K45$ action
models instead of $S5$ or $KD45$ models.

So far no one has looked at it from a computational aspect to
answer the following question
\begin{quote}
what are the \emph{initial functions} and the \emph{basic
operations} by which all $K45$ epistemic functions  can be obtained?
\end{quote}

The basic source of information change  in a multi-agent system is
\emph{learning} an announcement by some agents \emph{together},
\emph{privately}, \emph{concurrently} or even \emph{wrongly}. So
the \emph{basic operators} should be different kinds of learning.
Van Ditmarsch et al. introduced a learning constructor
in~\cite{kn:dit1,kn:dit2}. We define our own learning operator
which is different from van Ditmarsch's.

As our \emph{initial functions}, we take the \emph{test} of any
facts $\varphi$, that is $?\varphi$. For the \emph{basic
operations}, we take the following different kinds of learning: 1-
\emph{alternative learning}, 2- \emph{concurrent learning}, 3-
\emph{wrong learning}, and finally, 4- \emph{recursive learning}.
Following the footsteps of recursion theory, we prove that all
epistemic functions can be obtained through the test of facts by
applying the above four \emph{basic operations}.

Epistemic logic, started with Hintikka's
groundwork~\cite{kn:hintk}, models and reasons about the knowledge
of agents in a group~\cite{kn:reas}. In Epistemic logic, the
notions of knowledge and belief are modeled in terms of the
 possible worlds (states). An agent knows or believes a
fact if it is true in all the worlds that the agent considers
possible as alternatives for the actual world.

As information is transmitted, knowledge and belief of agents in
a multi-agent system may change. The simplest cause for an
information change is to announce some truth in public. Plaza in
1989~\cite{kn:plaz}, introduced a logic to formalize these
changes and called it the public announcement logic. In public
announcement logic, a fact is publicly announced in a multi-agent
system and all agents together update their knowledge and belief.
However, more complex actions than such as private or dishonest
announcements may occur, this is whereby  different agents may
have different views on some
action~\cite{kn:bal1,kn:bal,kn:dit1}, or the announcement may not
be truthful.

Our work may be considered as a bridge between two paradigms in
dynamic epistemic logic, namely Baltag et al. style action model
and van Ditmarsch et al. epistemic actions.

The paper is organized as follows. In section~\ref{bak}, we recall
the definition of action models from~\cite{kn:bal1}, and we
explain what it means that a group of agents learns an action
model. We discuss $K45$ Kripke models which only have transitive
and Euclidian properties. We also introduce a new notion called
the {\em applicable formulas}, and restrict the definition of
satisfaction relation only to applicable formulas.

In section~\ref{sdel}, we introduce the initial functions and
basic operations as the building blocks of the finite epistemic
functions.

In section~\ref{LBR}, we add the new operator of recursion in
constructing the finite epistemic functions.

Finally, in section~\ref{CRFW}, we compare our work with other
related works, and further works may be done.
\section{Backgrounds}\label{bak}
In this section, we introduce two significant notion: 1- learning
an action model, 2- $K45$ Kripke models and applicable formulas.

\subsection{Action models}\label{mean1}

\noindent  We start by recalling the definition of an action model
from Baltag~\cite{kn:bal1}. A pointed action model $(N,s_0)$ is a
tuple $N=\langle S,(\rightharpoonup_a)_{a\in A},pre\rangle$, where
$S$ is a set of events, $A$ is a set of agents,
$\rightharpoonup_a$ is an accessibility relation for agent $a$ on
events in $S$, and $pre$ is a function that assigns to each
event, a formula of an appropriate  epistemic language, as a
precondition for that event. $s_0\in S$ is called the {\em actual
event.}

Announcements of facts in multi-agent systems give rises to
information changes, where different agents have different access
to the resource of the announcement, and also different views
about the access of other agents to the resource.

For instance, consider the pointed action model $(N_1,s)$ in
Figure~1, where $S=\{s,t\}$, $ s\rightharpoonup_a t$,
$t\rightharpoonup_a t$, $pre(s)=\varphi, pre(t)=\psi$.
\begin{center}
\scalebox{.6}{\includegraphics{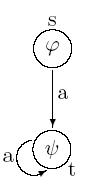}}

Figure~1
\end{center}
Here $(N_1,s)$ encodes the following information change
\begin{quote} `` $\varphi$ is announced whereas agent $a$  (wrongly)
learns $\psi$."
\end{quote}
One may assume that $\varphi=green$ and $\psi=blue$, are two
different colors. Then the action model $(N_1,s)$ says that a
green ball is shown to $a$, whereas agent $a$ thinks that she
sees a blue ball.

Consider the pointed model $(N_2,s)$ in Figure~2, where
$S=\{s,t\}$, $s\rightharpoonup_b s$, $ s\rightharpoonup_a t$,
$t\rightharpoonup_a t$, $pre(s)=\varphi$ and $pre(t)=\psi$.
\begin{center}
\scalebox{.6}{\includegraphics{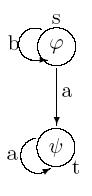}}

Figure~2
\end{center}
This action model encodes the following information change: ``a
green ball is shown to agents $a$ and $b$, agent $b$ sees a green
ball and is aware that agent $a$ has a color-blindness and sees a
blue ball. Agent $a$ just sees a blue ball and has no \emph{idea}
about what $b$ sees".

\begin{remark}
{\em The word ``\emph{having no idea}" used above is vague, and
needs to be clarified. In the action model $N_1$,  agent $b$ is
not present in the state $t$. So agent $a$ has no idea about the
information of agent $b$. There could be lots of possibilities
about the color-blindness of $b$ at  state $t$, but the action
model says nothing about it. We will later introduce the notion
of \emph{applicable formulas}~(see Definition~\ref{applic}) to
formally model this case.}
\end{remark}

For another example, consider the pointed model $(N_3,s)$ in
Figure~3, where $S=\{s,t\}$, $s\rightharpoonup_b s$,
$t\rightharpoonup_b s$ $ s\rightharpoonup_a t$,
$t\rightharpoonup_a t$, $pre(s)=\varphi$ and $pre(t)=\psi$.
\begin{center}
\scalebox{.6}{\includegraphics{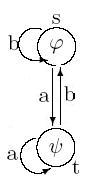}}

Figure~3
\end{center}
The action model $(N_3,s)$ encodes the information change that ``a
green ball is shown to agents $a$ and $b$,  agent $b$ sees a
green ball and is aware that agent $a$ has a color-blindness and
sees a blue ball. Agent $a$ sees a blue ball and wrongly thinks
that agent $b$ has a color-blindness and sees a green ball.
Moreover, both agents are aware about each other's thoughts".

According to the above discussion, a pointed action model $(N,s)$,
where $N=\langle S,(\rightharpoonup_a)_{a\in A},pre\rangle$,
encodes the information change that

\begin{quote}
\emph{the fact $pre(s)$ is announced whereas each agent (relevant
to its accessibility relation in the action model) acquires
information of what may have been announced and what other agents
may have heard.}
\end{quote}


\subsection{ Learning an Action  Model}\label{mean2}
We should also clarify what it means that a group of agents learns
an \emph{action model}. Suppose  $(N,s)$ is a pointed action
model, and $B$ is a group of agents. The case that
\begin{quote}
``group $B$ learns $(N,s)$",
\end{quote}
is a new action model which encodes the following information
change
\begin{quote}
``the fact $pre(s)$ is announced and group $B$ learns the fact
$pre(s)$ and the information that other agents (excluding $B$)
acquires due to learning $(N,s)$".
\end{quote}

For example, recall the pointed action model $(N_1,s)$ in
Figure~1, where it encodes the following information change

\begin{quote}
``$\varphi$ is announced whereas agent $a$ (wrongly) learns
$\psi$".
\end{quote}
An agent $b$ learns $(N_1,s)$ means
\begin{quote}
``$\varphi$ is announced and agent $b$ learns that $\varphi$ is
announced  and the fact that agent $a$ (wrongly) learns $\psi$.
Agent $a$ still wrongly learns $\psi$ and has no idea about what
$b$ learns".
\end{quote}
The case that agent $b$ learns $(N_1,s)$, denoted by
$L_b((N_1,s))$, is encoded by the pointed action model $(N_2,s)$
in Figure~2.

Four different kinds of learning, in a multi-agent system, may be
distinguished:

\begin{itemize} \item[1.] \textsc{Alternative Learning}: a group of agents,
$B$, together learns that among a set of actions
$(M_1,s_1),(M_2,s_2),..., (M_k,s_k)$, one of them is the actual
action.

\begin{example}{\em
Let $(M_1,s_1)$ be the pointed action model $(N_1, s_1)$ in
Figure~1, and $(M_2,s_2)$ be another action model, in which
$M_2=\langle S=\{s_2\},\emptyset,pre(s_2)=\chi\rangle$. Then the
pointed action model of $L_b((M_1,s_1),(M_2,s_2))$ is
\begin{center}
\scalebox{.6}{\includegraphics{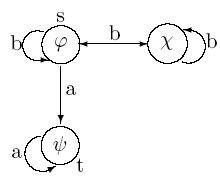}}

Figure~4
\end{center}
See Definition~\ref{defsem} for details.
 }
\end{example}

\item[2.] \textsc{Concurrent Learning}: two disjoint groups of agents
$B_1$ and $B_2$ learn concurrently but not together; group $B_1$
learns $(N_1,s_1)$ and group $B_2$ learns $(N_2,s_2)$ concurrently
but not together.

\begin{example}{\em Let $(M_1,s_1)$ be the pointed action model
$(N_1, s_1)$ in Figure~1, and $(M_2,s_2)$ be another pointed
action model, in which $M_2=\langle
S=\{s_2\},(s_2\rightharpoonup_b s_2) ,pre(s_2)=\varphi\rangle$.
Then the pointed action model of $(M_1,s_1)\cap(M_2,s_2)$ would be
\begin{center}
\scalebox{.6}{\includegraphics{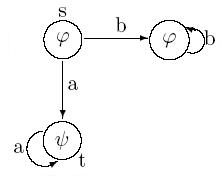}}

Figure~5
\end{center}
See Definition~\ref{defsem} for details.
 }
\end{example}

\item[3.] \textsc{Wrong Learning}: whereas a fact $\psi$ is
announced, a group of agents, $B$, wrongly learns something else.

\begin{example}{\em
$\varphi$ is announced and agent $a$ learns $\psi$. See the
pointed action model $(N_1,s_1)$ in Figure~1.}
\end{example}

\item[4.] \textsc{Recursive Learning}: a group of agents, $B_1$,
learns what another group of agents, $B_2$, learns, and group
$B_2$ learns what group $B_1$ learns.
\begin{example}\label{rexm}{\em
Consider the pointed action model $(N_3,s)$ in Figure~3.
\begin{center}
\scalebox{.6}{\includegraphics{fig3.jpg}}

Figure~3
\end{center}
The action model $(N_3,s)$ means

\noindent  $\varphi$ is announced and agent $b$ learns

\begin{quote}$\{$
$\varphi$ and the case that agent $a$ (wrongly) learns
\begin{quote}
$\{\psi$ and about what $b$ (wrongly) learns$\}$
\end{quote}
$\}$.
\end{quote}
}
\end{example}
\end{itemize}

\subsection{Epistemic Logic}

In this section we briefly go through the syntax and semantics of
epistemic logic. The syntax of epistemic logic is as usual, but
the semantics is a little bit different from the standard one.

\begin{definition} Let $P$ be a non-empty set of propositional
variables, and $A$ be a set of agents. The language $L(A,P)$ is
the smallest superset of $P$ such that
\begin{center}
if $\varphi,\psi\in L(A,P)$ then $\neg \varphi,\
(\varphi\wedge\psi),\ K_i\varphi\in L(A,P)$,
\end{center}
for $i\in A$.

For $i\in A$, $K_i\varphi$ has to be read as `agent $i$ believes
(knows) $\varphi$". For a group of agents $B\subseteq A$,
$K_B\varphi$ means that $K_i\varphi$, for all $i\in B$.
\end{definition}

Epistemic logic models the notions of knowledge and belief in
terms of the notion of possible worlds in Kripke semantics.

\begin{definition}\label{applic}
A Kripke model $M$ is a tuple $M=\langle
S,(\rightharpoonup_i)_{i\in A}, V\rangle$, where $S$ is a
non-empty set of worlds \emph{(states)} $s\in S$, $V$ is a
function from $P$ to $2^S$, and each $\rightharpoonup_i$ is a
binary accessibility relation between worlds. We define the group
\emph{present} at the state $(M,s)$ as follows:
\begin{center}
$gr((M,s))=\{i\in A\mid (\exists t\in S) ~s\rightharpoonup_i
t\}$.
\end{center}
For an epistemic state $(M,s)$, the set of \emph{applicable}
formulas at the state $(M,s)$, denoted by $\Phi_{(M,s)}\subseteq
L(A,P)$, is the smallest subset satisfying the following
conditions

\begin{itemize}
\item[1.] $P\subseteq \Phi_{(M,s)}$,

\item[2.] if $\varphi,\psi\in \Phi_{(M,s)}$ then
$\varphi\wedge\psi\in \Phi_{(M,s)}$ and $\neg\varphi\in
\Phi_{(M,s)}$,

\item[3.] $K_i\varphi\in \Phi_{(M,s)}$ if and only if  $i\in
gr((M,s))$ and for all $t$ such that $s\rightharpoonup_i t$,
$\varphi\in \Phi_{(M,t)}$.

\end{itemize}
\end{definition}

Intuitively, the \emph{applicable} formulas of an epistemic state,
are those formulas that can sensibly be assigned a truth value.
For example, consider the Kripke model in Figure~6.
\begin{center}
\scalebox{.6}{\includegraphics{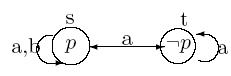}}

Figure~6
\end{center}
As agent $b$ is not present in the world $t$,  formulas like $K_b
\chi$ are not applicable in the world $t$ (it is not possible to
talk about the truth of $K_b \chi$ in world $t$, where agent $b$
is not present in this world). Also the formula $K_a(K_b p\vee
K_b\neg p)$ is not applicable at the world $s$. In the next
definition, we restrict  the definition of truth to applicable
formulas.

\begin{definition}
In order to determine whether an applicable formula $\varphi\in
\Phi_{(M,s)}$ is true in the epistemic state $(M,s)$, denoted by
$(M,s)\models \varphi$, we look at the structure of $\varphi$:
\[
\begin{array}{l}\begin{array}{ccccc}
  (M,s)\models p & \emph{iff} &s\in V(p) \\
  (M,s)\models
(\varphi\wedge\psi) & \emph{iff} & (M,s)\models\varphi~and~(M,s)\models\psi \\
  (M,s)\models\neg\varphi & \emph{iff} & not~(M,s)\models\varphi\ ((M,s)\not\models\varphi) \\
  (M,s)\models
K_i\varphi & \emph{iff} & \emph{for~all}~t~\emph{such~that}~
s\rightharpoonup_it,~(M,t)\models\varphi \\

\end{array}\quad\quad\quad\quad\quad\quad\quad\quad\quad\quad\quad\quad\quad
\end{array} \]

\end{definition}
Note that the satisfaction relation is just defined for
\emph{applicable} formulas.

The standard epistemic logic $S5$ consists of axioms $A1-A5$ and
the derivation rules $R1$ and $R2$ given below
\[
\begin{array}{l}\emph{R1:~}
\vdash\varphi,\ \vdash\varphi\rightarrow\psi\Rightarrow\ \vdash\psi\\
\emph{R2:~}\vdash\varphi\Rightarrow
K_i\varphi,\emph{~for~all}~i\in A
\quad\quad\quad\quad\quad\quad\quad\quad\quad\quad\quad\quad\quad\quad\quad\quad\quad\quad\quad
\end{array} \]
\[
\begin{array}{l}\emph{A1:~ Axioms~of~propositional~logic}\quad\quad\quad\quad\quad
\quad\quad\quad\quad\quad\quad\quad\quad\quad\quad\quad\quad\\
\emph{A2:~} (K_i\varphi\wedge
K_i(\varphi\rightarrow\psi))\rightarrow K_i\psi\\
\emph{A3:~} K_i\varphi\rightarrow\varphi\\
\emph{A4:~} K_i\varphi\rightarrow K_iK_i\varphi\\
\emph{A5:~} \neg K_i\varphi\rightarrow K_i\neg K_i\varphi\\
\end{array} \] If instead of $A3$, we assume  the weaker axiom $D$ (given below),
the logic of belief $KD45$ will be  specified.\[
\begin{array}{l}\emph{D:~}  \neg(K_i\varphi\wedge K_i\neg\varphi).  \quad\quad\quad\quad\quad
\quad\quad\quad\quad\quad\quad\quad\quad\quad\quad\quad\quad\quad\quad\quad\quad\quad\quad \\
\end{array} \]

\begin{definition}\label{kddef}
Let $M=\langle S,(\rightharpoonup_a)_{a\in A}, V\rangle$ be a
Kripke model. For each $a\in A$, we say that the relation
$\rightharpoonup_a$ is
\begin{itemize}

\item[$1)$] \emph{reflexive} if and only if for all $s\in S$,
$s\rightharpoonup_as$;

\item[$2)$] \emph{serial} if and only if for all $s\in S$,  there
exists $t\in S$ such that $s\rightharpoonup_at$;

\item[$3)$] \emph{transitive} if and only if for all $s,t,u\in S$,
if $s\rightharpoonup_at$ and $t\rightharpoonup_au$ then
$s\rightharpoonup_au$;

\item[$4)$] \emph{Euclidean} if and only if  for all three
 states $s,t,u\in S$
 if $s\rightharpoonup_at$ and $s\rightharpoonup_au$ then
$t\rightharpoonup_au$.
\end{itemize}
We say a relation $\rightharpoonup_a$ is

\begin{itemize}\item $S5$ whenever it is reflexive, transitive and
Euclidean, \item $KD45$ whenever it is \emph{serial}, transitive
and Euclidean, and

\item $K45$ whenever it is transitive and
Euclidean.\end{itemize}
\end{definition}
A Kripke model $M$ is called an $S5$ model if and only if for each
agent $a$, $\rightharpoonup_a$ is $S5$. It is obvious that every
$S5$ model is a model of the {\em standard} epistemic logic. A
Kripke model $M$ is called an $KD45$ model if and only if for each
agent $a$, $\rightharpoonup_a$ is $KD45$.

We also say a Kripke model $M$ to be a $K45$ model if and only if
for each agent $a$, $\rightharpoonup_a$ is $K45$.

An action model $\mathrm{N}$ is called a $K45$ model if and only
if for each agent $a$, $\rightharpoonup_a$ is $K45$, and it is an
$S5$ model if and only if for each agent $a$, $\rightharpoonup_a$
is $S5$. For example, the action model in Figure~2 is $K45$.
\begin{definition}
Let $P$ be a set of atomic formulas, $A$ be a set of agents, and
$\Phi$ be a set of epistemic formulas over atomic formulas $P$ and
agents in $A$.

\begin{itemize}
\item[] We use $\mathrm{FAct}(\Phi)$ (or simply $\mathrm{FAct}$)
to refer to the set of all pointed \emph{epistemic action} models
$(\mathrm{N},t)$ such that $N$ is a \emph{finite} $K45$ model, and
the image of $pre$ is $\Phi$.

\item[] We use $\mathrm{Mod}(A,P)$ (or simply $\mathrm{Mod}$) to
refer to the set of all \emph{epistemic states} $(M,s)$ such that
$M$ is $K45$.

\end{itemize}
\end{definition}

We recall definition of bisimulation of actions
from~\cite{kn:dit3}. Consider two action models
$\mathrm{N}=\langle S,(\rightharpoonup_a)_{a\in A}, pre\rangle$
and $\mathrm{N}'=\langle S',(\rightharpoonup'_a)_{a\in A},
pre'\rangle$. The pointed action model $(\mathrm{N},s)$ is
bisimilar to $(\mathrm{N}',s')$, denoted by $(\mathrm{N},s)\simeq
(\mathrm{N}',s')$, whenever there is a relation
$\mathcal{R}\subseteq S\times S'$   satisfying the following
conditions, for each agent $a\in A$:
\begin{itemize}
\item[]Initial. $\mathcal{R}(s,s')$.

\item[] Forth. If $\mathcal{R}(t,t')$ and $t\rightharpoonup_a v$,
then there is a $v'\in S'$ such that $\mathcal{R}(v,v')$ and
$t'\rightharpoonup'_a v'$.

\item[]Back. If $\mathcal{R}(t,t')$ and $t'\rightharpoonup'_a v'$,
then there is a $v\in S$ such that $\mathcal{R}(v,v')$ and
$t\rightharpoonup_a v$.

\item[] Pre. If $\mathcal{R}(t,t')$, then $pre(t)$ is equivalent
to $pre(t')$ in $KD45$ belief logic.
\end{itemize}

We define another notion of equivalence on action models and call
it {\em agent-bisimulation.}
\begin{definition}
Let $a\in A$ be an arbitrary agent. Two pointed action models
$(\mathrm{N},s)$ and $(\mathrm{N}',s')$ are $a$-bisimilar whenever
\begin{itemize}

\item[] Forth. If $s\rightharpoonup_a t$,
then there is a $t'\in S'$ such that $s'\rightharpoonup'_a t'$
and $(\mathrm{N},t)\simeq (\mathrm{N}',t')$.

\item[]Back. If $s'\rightharpoonup'_a t'$,
then there is a $t\in S$ such that $s\rightharpoonup_a t$ and
$(\mathrm{N},t)\simeq (\mathrm{N}',t')$.

\end{itemize}
\end{definition}

The execution of a pointed action model $(\mathrm{N},t)\in
\mathrm{FAct}$ on an epistemic state $(M,s)\in \mathrm{Mod}$ is a
new epistemic state $(M\ast\mathrm{N}, (s,t))$, where
$M\ast\mathrm{N}=\langle S,(\rightharpoonup_i)_{i\in A}, V\rangle$
and
\begin{itemize}
\item[] $S=\{(s_1,t_1)\mid s_1\in \mathrm{N}, t_1\in M,
(M,s_1)\models pre(t_1)\}$,

\item[] $(s_1,t_1)\rightharpoonup_i (s_2,t_2)$ iff
$s_1\rightharpoonup_i s_2$ and $t_1\rightharpoonup_i t_2$,

\item[]$(s_1,t_1)\in V(p)$ iff $s_1\in V(p)$.

\end{itemize}

\begin{proposition}
Suppose $(\mathrm{N},t)\in \mathrm{FAct}$ and $(M,s)\in
\mathrm{Mod}$. Then $(M\ast N, (s,t))\in \mathrm{Mod}$, i.e., it
is a $K45$ model.
\end{proposition}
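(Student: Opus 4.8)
The plan is to verify directly that the product construction $M \ast \mathrm{N}$ preserves transitivity and Euclideanness for each agent's accessibility relation, since those are precisely the two properties that define a $K45$ model (by Definition~\ref{kddef}). The key observation is that the composite relation $\rightharpoonup_i$ on $M \ast \mathrm{N}$ is defined \emph{componentwise}: $(s_1,t_1)\rightharpoonup_i (s_2,t_2)$ iff $s_1\rightharpoonup_i s_2$ in $M$ and $t_1\rightharpoonup_i t_2$ in $\mathrm{N}$. Because both $M$ (being in $\mathrm{Mod}$) and $\mathrm{N}$ (being in $\mathrm{FAct}$) are $K45$, each of their relations is transitive and Euclidean, and these closure properties pass through a componentwise product essentially for free.

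First I would fix an agent $i \in A$ and take three states $(s_1,t_1), (s_2,t_2), (s_3,t_3)$ of $M \ast \mathrm{N}$. For transitivity, assume $(s_1,t_1)\rightharpoonup_i (s_2,t_2)$ and $(s_2,t_2)\rightharpoonup_i (s_3,t_3)$. By definition this unpacks to $s_1\rightharpoonup_i s_2$, $s_2\rightharpoonup_i s_3$ in $M$ and $t_1\rightharpoonup_i t_2$, $t_2\rightharpoonup_i t_3$ in $\mathrm{N}$; applying transitivity of $\rightharpoonup_i$ in each of $M$ and $\mathrm{N}$ gives $s_1\rightharpoonup_i s_3$ and $t_1\rightharpoonup_i t_3$, hence $(s_1,t_1)\rightharpoonup_i (s_3,t_3)$. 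For the Euclidean property, assume $(s_1,t_1)\rightharpoonup_i (s_2,t_2)$ and $(s_1,t_1)\rightharpoonup_i (s_3,t_3)$; unpacking and applying the Euclidean property in $M$ and in $\mathrm{N}$ separately yields $s_2\rightharpoonup_i s_3$ and $t_2\rightharpoonup_i t_3$, so $(s_2,t_2)\rightharpoonup_i (s_3,t_3)$. Since $i$ was arbitrary, every relation of $M \ast \mathrm{N}$ is $K45$, which is exactly what it means for $M \ast \mathrm{N}$ to be a $K45$ model, so $(M \ast \mathrm{N},(s,t)) \in \mathrm{Mod}$.

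There is essentially no hard part here: the proof is a routine componentwise check, and the only thing worth a sentence of care is that the three states in question are genuinely states of $M \ast \mathrm{N}$ — i.e. they satisfy the precondition constraint $(M,s_j)\models pre(t_j)$ — but this plays no role in the relational argument, since we never need to produce new states, only to derive new edges between states already given. One might also note in passing that the same argument shows the product preserves any relational property closed under componentwise products (reflexivity, seriality, symmetry, etc.), so in particular if both inputs were $S5$ or $KD45$ the output would be as well; but for this proposition only transitivity and Euclideanness are needed, and the claim follows.
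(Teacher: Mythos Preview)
Your proposal is correct and is precisely the ``straightforward'' componentwise verification the paper has in mind; the paper's own proof consists only of the sentence ``It is straightforward.'' There is nothing to add.
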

\begin{proof} It is straightforward.
\end{proof}


\subsection{Why $K45$  Models and Applicable Formulas?}
$K45$ Kripke models are more general than $KD45$ models, as they
necessarily do not have the serial property. The reason that we
consider serial property for $KD45$ model is that, we want the
agent's belief to be consistent. For $K45$ Kripke models, we
consider the definition of satisfaction relation just for {\em
applicable} formulas. In this way, the agent's beliefs are
consistent at each state. Moreover,

\begin{quote}regarding {\em
applicable} formulas, the class of $K45$ models  is a (sound and
complete) semantics for   logic of belief $KD45$.
\end{quote}
 Assume that a formula $\varphi$ is
derivable from the logic of belief, i.e., $KD45\vdash \varphi$.
Then for every $K45$ pointed model $(M,s)$, if $\varphi$ is
applicable at this state, then $(M,s)\models \varphi$. Also, note
that every $KD45$ model is also a $K45$ model. So for any formula
$\varphi$, if for all $K45$ pointed model $(M,s)$ which $\varphi$
is applicable at the state, we have $(M,s)\models \varphi$ then
$KD45\vdash \varphi$.

Since $K45$ models are more general than $KD45$ models, we can
encode more epistemic functions in $K45$ models, as we do not
have to determine the cases where an agent does not have any idea
about the belief of another agent.  For example, consider the
following information change:

\begin{quote}
a green ball is shown to agents $a$ and $b$, agent $b$ sees a
green ball and is aware that agent $a$ has a color-blindness and
sees a blue ball. Agent $a$ just sees a blue ball and has no idea
about what $b$ sees.
\end{quote}
In the above information change, agent $a$ has no idea about what
$b$ sees. Agent $a$ says I do not have any idea, there could be
lots of possibilities and I do not know even how many
possibilities exist, may be an infinite number of them. It would
be possible that agent $b$ sees a cube instead of a ball, or even
an elephant, and etc. It would be possible that agent $b$ sees the
ball in a color which is unknown for me.

In $KD45$ models, we are forced to encode all possibilities, but
if it happens that some possibilities are unknown, then we don't
know what to do. However such  information change can   be encoded
by the $K45$ action model in figure~2.

Therefore, considering applicable formulas, the class of $K45$
models is still a semantics for logic of belief (similar to the
class of $KD45$ models) and moreover, they are enough general than
$KD45$ models for describing information changes formally.

\section{Basic Learning Programs}\label{sdel}
As mentioned in the introduction, our goal is to introduce some
initial functions and  basic operators as the building blocks of
all finite epistemic functions ($\mathrm{FAct}$). In this
section, we introduce the first class of epistemic learning
programs, called {\em basic learning programs}. The reason these are
called {\em basic} is that they do not include any recursion in their
structure.

\begin{definition}
Let $\Phi$ be a set of epistemic formulas over a set of atomic
formulas $P$ and a set of agents $A$.  The set of \emph{basic
learning programs} $\mathrm{BLP}(\Phi)$ is defined as follows:

\begin{itemize}

\item[i.]  \verb"Test". for all $\varphi\in \Phi$, $?\varphi$ is a
basic learning program, and we define $group(?\varphi)=\emptyset$,
and $pre(?\varphi)=\varphi$,

\item[ii.]\verb"Alternative Learning". for all $n\in \mathbb{N}$, and
$B\subseteq A$, if $\alpha_1,\alpha_2,...,\alpha_n$ are basic
learning programs, then $L_B(\alpha_1,\alpha_2,...,\alpha_n)$ is a
basic learning program, and we define
$group(L_B(\alpha_1,\alpha_2,...,\alpha_n))=B\cup
group(\alpha_1)$ and
$pre(L_B(\alpha_1,\alpha_2,...,\alpha_n))=pre(\alpha_1)$,

\item[iii.]\verb"Concurrent Learning". if $\alpha_1,\alpha_2$ are
basic learning programs such that $pre(\alpha_1)=pre(\alpha_2)$
and $group(\alpha_1)\cap group(\alpha_2)=\emptyset$, then
$\alpha_1\cap\alpha_2$ is a basic learning program. We define
$group(\alpha_1\cap\alpha_2)=group(\alpha_1)\cup group(\alpha_2)$,
and $pre(\alpha_1\cap\alpha_2)=pre(\alpha_1)$,

\item[iv.]\verb"Wrong Learning". if $\alpha_1$ is a basic learning
program and $\psi$ is an epistemic formula, then $\psi|_B\alpha_1$
is a basic learning program whenever $B\subseteq group(\alpha_1)$.
We define $group(\psi|_B\alpha_1)=B$ and
$pre(\psi|_B\alpha_1)=\psi$.
\end{itemize}
\end{definition}

To each basic learning program, we associate a pointed action
model as follows.

\begin{definition}\label{defsem}\textbf{\emph{Semantics of BLP}}.
\begin{itemize}

\item[1.] for all $\varphi\in \Phi$, the pointed action model of
the program $?\varphi$,  is
$(\mathrm{N}_{?\varphi},s_{?\varphi})$, where
\begin{center}
$\mathrm{N}_{?\varphi}=\langle
\{s_{?\varphi}\},(\rightharpoonup_a)_{a\in A}, pre\rangle$,
\end{center}
in which for all $a\in A$, $\rightharpoonup_a=\emptyset$ and
$pre(s_{?\varphi})=\varphi$.

\item[2.]  Suppose $\alpha_1,\alpha_2,...,\alpha_k$ are basic
learning programs and their associated action models are
$(N_1,s_1), (N_2,s_2), ..., (N_k,s_k)$, where $N_l=\langle
S_l,(\rightharpoonup^l_a)_{a\in A}, pre_l\rangle$, for $1\leq
l\leq k$, and $\cap_{1\leq l\leq k}S_k=\emptyset$. Then the
associated action model to the basic learning program
$L_B(\alpha_1,\alpha_2,...,\alpha_n)$ is $(N,s)$, where $N=\langle
S,(\rightharpoonup_a)_{a\in A}, pre\rangle$ and
\begin{itemize}
\item $S=\{(s_1,1),(s_2,1),...,(s_k,1)\}\cup S_1\cup
S_2\cup...\cup S_k$,

\item for all $b\in B$, for all $1\leq i,j\leq k$, if $s_i$ is
$b$-bisimilar to $s_j$ then $(s_i,1)\rightharpoonup_b (s_j,1)$,

\item for all $1\leq j\leq k$, for all $a\in gr(N_j,s_j)-B$, for
all $t\in S_j$, if $s_j\rightharpoonup^j_a t$ then
$(s_j,1)\rightharpoonup_a t$,

\item for all $a\in A$, for all $1\leq i\leq k$,  for all $v,t\in
S_i$, if $v\rightharpoonup^i_a t$ then $v\rightharpoonup_a t$,

\item $s=(s_1,1)$,

\item $pre((s_l,1))=pre(s_l)$ for all $1\leq l\leq k$.
\end{itemize}
The associated action model of
$L_B(\alpha_1,\alpha_2,...,\alpha_n)$ means:   $pre(\alpha_1)$ is
announced whereas  agents in $B$ learn  that either $\alpha_1$ or
$\alpha_2$ or ... or $\alpha_k$ has been executed, and what other
agents learn in each alternative case.

\item[3.] Suppose $\alpha_1,\alpha_2$ are basic learning programs
and their associated action models are $(N_1,s_1)$ and $(N_2,s_2)$
respectively, where $N_l=\langle S_l,(\rightharpoonup^l_a)_{a\in
A_l}, pre_l\rangle$ for $l=1$ or $l=2$, $A_1\cap A_2=\emptyset$
and $S_1\cap S_2=\emptyset$. Then the pointed action model of the
basic learning program $\alpha_1\cap\alpha_2$ is $(N,s)$ with
$N=\langle S,(\rightharpoonup_a)_{a\in A}, pre\rangle$, where
\begin{itemize}
\item $S=\{s\}\cup S_1\cup S_2$, for some $s\not \in S_1\cup S_2$,

\item for all $1\leq i\leq 2$, for all $a\in A_i$,  if
$v\rightharpoonup^i_a t$ then $v\rightharpoonup_a t$,

\item for all $1\leq j\leq 2$, for all $a\in A_j$, if
$s_j\rightharpoonup^j_a t$ then $s\rightharpoonup_a t$,

\item $pre (s)= pre(s_1)$.
\end{itemize}
The associated action model of $\alpha_1\cap\alpha_2$ means:
$pre(\alpha_1)$ is announced whereas agents in $group(\alpha_1)$
learn according to execution of $\alpha_1$, agents in
$group(\alpha_2)$ learn according to execution of $\alpha_2$.

\item[4.]

Suppose $\alpha$ is a basic learning program and its associated
action model is $(N_1,s_1)$, where $N_1=\langle
S,(\rightharpoonup^1_a)_{a\in A}, pre_1\rangle$. Then the
associated action model of $\psi |_B\alpha$ is $(N,s)$ with
$N=\langle S,(\rightharpoonup_a)_{a\in A}, pre\rangle$, where
\begin{itemize}
\item $S=\{s\}\cup S_1$, for some $s\not \in S_1$,

\item for all $a\in A$, if $v\rightharpoonup^1_a t$ then
$v\rightharpoonup_a t$,

\item for all $b\in B$, if $s_1\rightharpoonup^1_b t$ then
$s\rightharpoonup_b t$,

\item $pre (s)= \psi$.
\end{itemize}

The associated action model of $\psi|_B\alpha_1$ means:   $\psi$
is announced whereas agents in group $B$ wrongly learn according
to execution of $\alpha_1$.
\end{itemize}
\end{definition}

The action model of a program $\alpha$ is denoted by
$(N_\alpha,s_\alpha)$. An \emph{epistemic action}
$(\mathrm{N},t)\in \mathrm{FAct}(\Phi)$ is called \emph{basic
learning action} whenever there is a basic learning programs
$\alpha\in\mathrm{BLP}(\Phi)$  such that $(N_\alpha,s_\alpha)$ is
bisimilar to $(\mathrm{N},t)$.

\begin{example}{\em The action model associated with the basic
learning program
\begin{center}
$L_b(\varphi|_aL_a?\psi)$.
\end{center} is bisimilar to the action model illustrated in
Figure~2.

\noindent The action model illustrated in Figure~4 is a basic
learning action. It is bisimilar to the action model associated
with the basic learning program
\begin{center}
$L_b(\varphi|_aL_a?\psi,?\chi)$.
\end{center}
Also, the action model associated with  the basic learning program
\begin{center}
$\varphi|_aL_a?\psi\cap \varphi|_bL_b?\varphi$.
\end{center}
is bisimilar with  the action model illustrated in Figure~5. }
\end{example}

\begin{definition}\label{closef}
Let $N=\langle S,(\rightharpoonup_a)_{a\in A}, pre\rangle$ be an
$K45$ action model.
\begin{itemize}

\item An $S5$ submodel of $N$ is an $S5$ action model $M=\langle
S',(\rightharpoonup'_a)_{a\in B}, pre'\rangle$, where $S'\subseteq
S$, $pre'=pre|_{S'}$, $B\subseteq A$ and for all $a\in B$,
$\rightharpoonup'_a= \rightharpoonup_a|_{S'}$. An $S5$ submodel
is called \emph{connected} whenever all two different states of
the model are \emph{reachable} from each other. It is called
\emph{closed} whenever for all $s,t\in S$, if $s\in S'$ and for
some $a\in B$, $s\rightharpoonup_a t$, then $t\in S'$.

\item Let $M'=\langle S',(\rightharpoonup'_a)_{a\in B}, pre'\rangle$ and
$M''=\langle S'',(\rightharpoonup''_a)_{a\in C}, pre''\rangle$ be
two closed connected $S5$ action submodels of $N$. We write
$M'\leq M''$ whenever $S'\subseteq S''$ and $B\subseteq C$. One
may easily verify that $\leq$ is a partial order relation.

\item  Assume $M^i=\langle S^i,(\rightharpoonup^i_a)_{a\in B^i},
pre^i\rangle$, $1\leq i\leq k$ are all different maximal closed
connected $S5$ submodel of $N$ with respect to the partial order
relation $\leq$. We construct the action model $T(N)=\langle
S',(\rightharpoonup'_a)_{a\in A}, pre'\rangle$ as follows:
\begin{itemize}
\item $S'=\bigcup_{1\leq i\leq k} (S^i\times\{i\})$

\item $(s,i)\rightharpoonup'_a (t,j)$ if and only if either $i=j$
and $s\rightharpoonup^i_a t$, or $i\neq j$, $a\not \in B^i$, $a\in
B^j$, and $s\rightharpoonup_a t$ (in the action model $N$).

\item $pre'((s,i))=pre^i(s)$.
\end{itemize}
\item The \emph{projection} of a state $(s,i)$ in the action
model $T(N)$, denoted by $\Pi((s,i))$, is defined to be the state
$s$ in the action model $N$.

\item For the action model $N$, we define the \emph{directed graph} of
$N$, denoted by $G(N)$ as follows:
\begin{itemize}
\item Each node $G(N)$ is a maximal closed connected $S5$
submodel $M^i$ of $N$.

\item Between two different nodes $M^i,M^j$, there exists a
directed edge $M^i\rightarrowtail M^j$ if and only if there
exists an accessibility relation in the action model $T(N)$,
$(s,i)\rightharpoonup'_a (t,j)$, for some $s\in S_i$, $a\in A$,
and $t\in S_j$.
\end{itemize}
\end{itemize}
\end{definition}

It is obvious that an $S5$ submodel of a $K45$ action model may be
connected but not closed and vice versa.

\begin{proposition}\label{ddd}
Let $N=\langle S,(\rightharpoonup_a)_{a\in A}, pre\rangle$ be a
$K45$ action model. Then $(N,s_0)$ is bisimilar to $(T(N),w_0)$
for all $s_0$ and $w_0$ for which the state $s_0$ is the
projection of $w_0$.
\end{proposition}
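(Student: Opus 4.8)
The plan is to write down a single bisimulation $\mathcal{R}\subseteq S\times S'$ that witnesses the statement for every eligible pair $(s_0,w_0)$ at once, namely $\mathcal{R}:=\{(\Pi(w),w)\mid w\in S'\}$; explicitly, $\mathcal{R}(s,(s,i))$ holds exactly when $1\leq i\leq k$ and $s\in S^i$. Then the \emph{Initial} clause is precisely the hypothesis $s_0=\Pi(w_0)$, and the \emph{Pre} clause is immediate because $pre'((s,i))=pre^i(s)=pre(s)$, so related states carry literally the same precondition (hence $KD45$-equivalent ones). Everything therefore reduces to verifying \emph{Forth} and \emph{Back} for this $\mathcal{R}$.

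\emph{Back} is routine: if $\mathcal{R}(s,(s,i))$ and $(s,i)\rightharpoonup'_a(v,j)$, then by the definition of the accessibility relation of $T(N)$ either $i=j$ with $s\rightharpoonup^i_a v$ — whence $s\rightharpoonup_a v$ in $N$ since $\rightharpoonup^i_a=\rightharpoonup_a|_{S^i}$ — or $s\rightharpoonup_a v$ already holds in $N$; in either case $\Pi((v,j))=v$, so $\mathcal{R}(v,(v,j))$ and we are done. For \emph{Forth}, suppose $\mathcal{R}(s,(s,i))$ and $s\rightharpoonup_a v$ in $N$, and split on whether $a\in B^i$. If $a\in B^i$, then since $s\in S^i$ and $M^i$ is \emph{closed} we get $v\in S^i$, hence $s\rightharpoonup^i_a v$ and $(s,i)\rightharpoonup'_a(v,i)$ by the first clause, with $\Pi((v,i))=v$. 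If $a\notin B^i$, the target will be a copy $(v,j)$ inside a maximal closed connected $S5$ submodel $M^j$ with $v\in S^j$ and $a\in B^j$; since $a\notin B^i$ we must have $j\neq i$, so $(s,i)\rightharpoonup'_a(v,j)$ by the second clause, and again $\Pi((v,j))=v$.

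The only real work, and the step I expect to be the main obstacle, is producing that submodel $M^j$, and this is where the $K45$ frame conditions enter. From $s\rightharpoonup_a v$ and Euclideanness one gets $v\rightharpoonup_a v$; set $C:=\{u\in S\mid v\rightharpoonup_a u\}\ni v$. A short computation with transitivity and Euclideanness shows (i) $\rightharpoonup_a$ restricted to $C$ is the universal relation $C\times C$, hence $S5$, and (ii) for every $u\in C$ one has $\{u'\mid u\rightharpoonup_a u'\}=C$. Consequently the one-agent submodel $\langle C,\rightharpoonup_a|_C,pre|_C\rangle$ over agent set $\{a\}$ is closed (by (ii)) and connected (by (i), any two of its states are one $\rightharpoonup_a$-step apart), so it is a closed connected $S5$ submodel of $N$ containing $v$ with $a$ in its agent set. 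Since $N$ is finite — or, in general, by Zorn's lemma applied to the partial order $\leq$ of Definition~\ref{closef} — this submodel lies $\leq$ some maximal one $M^j$, and $\leq$ unpacks to $C\subseteq S^j$ and $\{a\}\subseteq B^j$, i.e.\ $v\in S^j$ and $a\in B^j$, exactly as needed. With all four clauses verified, $\mathcal{R}$ witnesses $(N,s_0)\simeq(T(N),w_0)$.
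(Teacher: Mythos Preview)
Your proof is correct and follows essentially the same approach as the paper's: the same projection-based relation $\mathcal{R}$, the same case split on $a\in B^i$ for \emph{Forth}, and the same appeal to the Euclidean property to obtain $v\rightharpoonup_a v$. The only difference is that you spell out explicitly the construction of the closed connected $S5$ submodel $\langle C,\rightharpoonup_a|_C,pre|_C\rangle$ and its extension to a maximal one, whereas the paper simply asserts the existence of the requisite $M^j$ after invoking the lemma that $v\rightharpoonup_a v$; your added detail is a welcome elaboration rather than a different route.
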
\begin{proof} See the Appendix.\end{proof}



\begin{theorem}\label{prtree1}
If a pointed $K45$ action model $(N,s)$ is a basic learning
action, then its graph $G(N)$ is a tree.
\end{theorem}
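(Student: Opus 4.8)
The plan is to argue by structural induction on the basic learning program $\alpha$ witnessing that $(N,s)$ is a basic learning action, showing that the graph $G(N_\alpha)$ is a tree with root the maximal closed connected $S5$ submodel containing $s_\alpha$. Since bisimilar action models have (up to isomorphism) the same maximal closed connected $S5$ submodels and the same induced directed graph structure, it suffices to treat $(N,s) = (N_\alpha,s_\alpha)$ exactly, rather than only up to bisimulation. For the base case $\alpha = {?\varphi}$, the action model is a single state with no accessibility edges, which is itself an $S5$ submodel, so $G(N_\alpha)$ is a single node --- trivially a tree.

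For the inductive step I would examine each of the three remaining constructors in turn, using the explicit descriptions in Definition~\ref{defsem}. In each case the new action model $N_\alpha$ is built by adding one fresh state $s$ (the new root event) on top of the component models $N_1,\dots,N_k$ (for alternative learning) or $N_1,N_2$ (for concurrent learning) or $N_1$ (for wrong learning), with carefully restricted edges out of $s$. The key observations are: (1) the fresh state $s$, together with those component root-states $b$-bisimilar to it for $b \in B$, forms a new maximal closed connected $S5$ submodel $M_0$ --- one checks closure and connectedness directly from the edge definition, and $S5$-ness because the only relations present at $s$ are for agents in $B$ and they link $s$ only to $s$-like states; (2) every other maximal closed connected $S5$ submodel of $N_\alpha$ lies entirely within one of the component models $N_i$ and is already a maximal such submodel there, because no agent's relation in $N_\alpha$ crosses between distinct components except those emanating from the new root; and (3) in the directed graph $G(N_\alpha)$, the only edges out of $M_0$ go to (roots of) the components, and there are no edges from any component back into $M_0$ and no edges between distinct components. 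Points (2) and (3) show $G(N_\alpha)$ is obtained from the disjoint union of the $G(N_i)$ --- each a tree by induction hypothesis, possibly with its root merged into $M_0$ if that component's root-state is absorbed into $M_0$ --- by adding a single new root $M_0$ with edges down to each component's root. This is exactly the tree-grafting operation, so $G(N_\alpha)$ is a tree.

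The main obstacle I expect is verifying claim (2) precisely: that adding the new root state does not accidentally enlarge, merge, or create spurious closed connected $S5$ submodels inside or across the old components, and in particular that a component root-state $s_i$ which gets a new incoming $B$-edge from $s$ does not thereby fail to be the root of a maximal closed connected $S5$ submodel of the combined model. Here one must use that the edges added out of $s$ are outgoing only (no edges are added \emph{into} the components except these, and none out of the components into $s$ or across components), so closedness of any $S5$ submodel internal to a component is unaffected, and the $S5$ axioms (reflexivity in particular) force any $S5$ submodel touching $s$ to be confined to $M_0$ since $s$ has no reflexive loop for agents outside $B$ and hence cannot sit inside any larger $S5$ block. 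A secondary subtlety is bookkeeping the agent-set restrictions ($A_1 \cap A_2 = \emptyset$ in concurrent learning, $B \subseteq group(\alpha_1)$ in wrong learning, $B \cup group(\alpha_1)$ in alternative learning) to confirm that the $B^i$-labels on nodes of $G(N_\alpha)$ behave as required and that no back-edge is forced by the Euclidean or transitivity closure conditions implicit in $K45$. Once these local checks are in place, the tree structure follows formally from the grafting description.
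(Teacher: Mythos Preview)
Your approach is essentially the paper's: structural induction on the basic learning program $\alpha$, in each step identifying a new root node $M_0$ of $G(N_\alpha)$, checking that all other maximal closed connected $S5$ submodels sit unchanged inside the component models, and concluding that $G(N_\alpha)$ arises by grafting the component trees under $M_0$.

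One point needs correcting. In the alternative learning case $L_B(\alpha_1,\dots,\alpha_k)$, Definition~\ref{defsem} does not add a single fresh state: it adds $k$ fresh copies $(s_1,1),\dots,(s_k,1)$ of the component roots, places the $B$-edges only among these copies (according to $b$-bisimilarity), and includes the original component models $N_1,\dots,N_k$ \emph{unchanged}. So $M_0$ consists entirely of these $k$ fresh states, and your worry about a component root-state being ``absorbed into $M_0$'' or losing its status as a maximal $S5$ submodel does not arise---the old $s_i$ receive no new edges at all. With this correction your observations (2) and (3) hold exactly as stated, and the paper's proof proceeds case by case (first $\psi|_B\alpha$, then $\alpha_1\cap\alpha_2$, then $L_B(\alpha_1,\dots,\alpha_m)$) along the same lines you sketch.
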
\begin{proof} See the Appendix.\end{proof}

\begin{example}{\em Consider the action model $(N_3,s)$ shown
in Figure~3. There is no basic learning program $\alpha$ such that
its action model is $(N_3,s)$. That is because $G(N_3)$ is not a
tree.
 }
\end{example}

We showed that the graph of the basic learning actions are trees.
The converse is also true, that is, for each $K45$ pointed action
model $(N,s)$, if its graph is a tree then it is associated to a
basic learning program up to bisimilarity.

\begin{proposition}\label{s5prop}
All $S5$ pointed action models are basic learning actions.
\end{proposition}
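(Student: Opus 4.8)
The plan is to reduce to the connected case and then read the result off the graph characterisation. Since ``basic learning action'' is invariant under bisimilarity, I would first replace the given $S5$ pointed action model $(N,s_0)$ by the (finite, still $S5$, hence $K45$) submodel consisting of the events reachable from $s_0$, so I may assume every event of $N$ is reachable from $s_0$. In an $S5$ model each $\rightharpoonup_a$ is reflexive and Euclidean, hence symmetric, hence an equivalence relation, so the reachability relation $\bigl(\bigcup_{a\in A}\rightharpoonup_a\bigr)^{*}$ is an equivalence relation; as it has a single class, $N$ is connected in the sense of Definition~\ref{closef}. Consequently $N$ itself, with all its events and all agents in $A$, is a closed connected $S5$ submodel of $N$, evidently the unique $\leq$-maximal one, so the directed graph $G(N)$ has exactly one node and no edges. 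A one-node graph is a tree, so by the converse of Theorem~\ref{prtree1} recorded just above — a $K45$ pointed action model whose graph is a tree is associated, up to bisimilarity, with a basic learning program — $(N,s_0)$ is a basic learning action. (Equivalently, by Proposition~\ref{ddd} one has $(N,s_0)\simeq(T(N),w_0)$, and for connected $S5$ $N$ the model $T(N)$ is just $N$ with its single component relabelled.)

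If one prefers an argument that does not presuppose the converse of Theorem~\ref{prtree1} — for instance if that converse is itself meant to be derived from this proposition — then one builds the program directly. View a connected $S5$ action model as a finite set of events carrying, for each agent $a$, a partition into $a$-equivalence classes, plus a precondition on each event. The idea is to assign to every event $s$, by recursion on the depth of bisimulation types, a basic learning program $\beta_s$ with $pre(\beta_s)=pre(s)$ and action model bisimilar to $(N,s)$: start from $\beta^0_s =\ ?pre(s)$ and, at each stage, combine the programs of the events that agents can currently confuse with $s$ using the alternative-learning operator $L_B$ — taking $B=A$ when one wants an agent to see the whole current list, and smaller groups, suitably nested over the agents, to realise finer partitions — always listing the program of $s$ itself first so that the preconditions propagate. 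Since $N$ is finite, the type hierarchy stabilises after finitely many stages, and the program reached there is the desired $\beta_{s_0}$.

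The step I expect to be the real work is the bisimulation check underlying this direct construction: one must verify that the side condition ``$s_i$ is $b$-bisimilar to $s_j$'' in the semantics of $L_B$ (Definition~\ref{defsem}(2)) links the root copies of the component programs along the relation $\rightharpoonup_a$ of $N$ \emph{exactly}, adding no spurious edges and omitting none, and that the only events of the resulting action model reachable from its point are the intended ones. This is essentially a bookkeeping problem about which agents are ``present'' at each event — the interplay of $gr$, $group$, and the clause routing the non-$B$ agents into the argument models — together with the observation that $a$-bisimilarity holds vacuously precisely when the relevant root has no $a$-successor, which is exactly what lets the choice $B=A$ make an agent's relation total on a sublist while a non-trivial $B$ keeps it refined. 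In the short route via the converse of Theorem~\ref{prtree1} this has all been paid for already, and the proposition is a one-line corollary.
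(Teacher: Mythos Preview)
Your first route is circular in this paper: Theorem~\ref{prtree2} (the converse of Theorem~\ref{prtree1}) is proved by induction on the height of the tree $G(N)$, and its base case $h_N=1$ is precisely ``$N$ is $S5$, so apply Proposition~\ref{s5prop}''. You do anticipate this possibility, but since it is in fact the situation here, the short route is unavailable and the content of the proposition must come from a direct construction.

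Your second route, however, is not yet a proof and does not match what the paper actually does. You propose to build the program using only nested alternative-learning operators $L_B$ with varying $B$, refined by a ``recursion on bisimulation types''. The paper's construction is quite different and relies essentially on the \emph{wrong learning} and \emph{concurrent learning} operators, which you do not use at all. Concretely: for each agent $a_l$ one lists the $\rightharpoonup_{a_l}$-equivalence classes $D^1_l,\ldots,D^{n_l}_l$ and picks pairwise $KD45$-inequivalent ``tag'' formulas $\psi_{l,1},\ldots,\psi_{l,n_l}$; to each state $s_j$ one then attaches, for every agent, the tag of its class via $\beta_{j,l}=pre(s_j)\,|_{a_l}\,L_{a_l}(?\psi_{l,h})$ (where $s_j\in D^h_l$), combines these over all agents by $\alpha_j=\beta_{j,1}\cap\cdots\cap\beta_{j,m}$, and finally forms $L_A(\alpha_0,\alpha_1,\ldots,\alpha_k)$. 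The point is that $(N_{\alpha_i},s_{\alpha_i})$ and $(N_{\alpha_j},s_{\alpha_j})$ are $a_l$-bisimilar \emph{iff} their $a_l$-tags coincide, \emph{iff} $s_i$ and $s_j$ lie in the same $\rightharpoonup_{a_l}$-class; this is exactly what makes the edge clause of Definition~\ref{defsem}(2) reproduce the accessibility relations of $N$.

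The gap in your sketch is that nesting $L_B$'s alone gives you no control over this $a$-bisimilarity side condition: two different $a$-classes can be $a$-bisimilar in the ambient model, and without an external tagging device you cannot force the semantics of $L_A(\ldots)$ to separate them when needed (nor, conversely, to merge distinct-looking components that should be $a$-linked). The paper sidesteps this by manufacturing the distinction artificially through the $\psi_{l,h}$'s. Your ``type-stabilisation'' idea would at best recover the bisimulation quotient of $N$, which is fine up to bisimilarity, but you have not explained how the $L_B$ operator alone encodes, for each agent simultaneously, an arbitrary partition of the state set --- and that is exactly the work that $|_{a_l}$ together with $\cap$ is doing in the paper's proof.
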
\begin{proof} See the Appendix.\end{proof}





\begin{theorem}\label{prtree2}
If the graph of a finite $K45$ pointed model $(N,s)$ is a
\emph{finite} tree, then $(N,s)$ is a basic learning action.
\end{theorem}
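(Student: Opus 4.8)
The plan is to proceed by induction on the structure of the tree $G(N)$, reconstructing a basic learning program from the root downwards. First I would set up the correspondence: by Proposition~\ref{ddd} we may replace $(N,s)$ by $(T(N),w_0)$, so we may assume every state of $N$ carries a ``color'' $i$ recording which maximal closed connected $S5$ submodel $M^i$ it belongs to, and that the accessibility edges between distinct submodels respect the edge structure of the tree $G(N)$. Let $M^{i_0}$ be the root node of $G(N)$ (the submodel containing $s=w_0$). The key observation is that $M^{i_0}$, being an $S5$ action model, is a basic learning action by Proposition~\ref{s5prop}; write $\beta_0$ for a basic learning program whose action model is bisimilar to the pointed version of $M^{i_0}$ rooted at $s$.

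Next I would analyze how the subtrees hanging off $M^{i_0}$ attach. For each child node $M^j$ of $M^{i_0}$ in $G(N)$, there is a set of agents $B^j$ that are ``present'' in $M^j$ but not in $M^{i_0}$, and the edges from states of $M^{i_0}$ into $M^j$ are exactly the $B^j$-labelled edges from the construction of $T(N)$. Inductively, the subtree of $G(N)$ rooted at $M^j$ is the graph of a $K45$ action submodel $N_j$ (the union of all nodes in that subtree, with induced relations), which is smaller, so by the induction hypothesis $N_j$ with an appropriate base point is a basic learning action, say realized by a program $\alpha_j$. Then the entire structure is obtained by starting from $M^{i_0}$ and, at each state $t$ of $M^{i_0}$ that has a $B^j$-edge to the root of $N_j$, having the agents in $B^j$ wrongly learn $\alpha_j$ (i.e.\ splicing in $\psi|_{B^j}\alpha_j$ where $\psi = pre(t)$); when a single state $t$ has edges to several children, or when several states carry such edges, one combines these via concurrent learning $\cap$ (the groups $B^j$ for distinct children are disjoint since each agent ``enters'' along a unique tree branch) and via alternative learning $L_B$ to bundle the several states of $M^{i_0}$ together. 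Writing $\beta_0$ itself as a combination of $L_{B^{i_0}}(\cdots)$ applied to test programs and wrong-learning subprograms, one threads the $\psi|_{B^j}\alpha_j$ pieces into the right argument slots of that outermost $L$, obtaining a single basic learning program $\alpha$ for all of $N$.

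The main obstacle I anticipate is bookkeeping rather than conceptual: one must show that the program built this way has an action model that is genuinely \emph{bisimilar} to $(N,s)$, and in particular that the semantics of $L_B$ in Definition~\ref{defsem} — which glues copies $(s_i,1)$ via $b$-bisimilarity of the $s_i$ and lets non-$B$ agents ``see through'' to the argument models — reproduces exactly the closed-connected-$S5$-block $M^{i_0}$ together with its outgoing tree edges. This requires checking that two states of $M^{i_0}$ are $b$-bisimilar (for $b\in B^{i_0}$) in the relevant sense precisely when they are $\rightharpoonup_b$-related in $N$, which should follow from $M^{i_0}$ being a \emph{connected} $S5$ block (so all its states are mutually reachable and hence, together with the attached subtrees being identical up to bisimulation when the blocks agree, mutually bisimilar as needed) — and conversely that the tree-ness of $G(N)$ prevents any spurious identifications. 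I would handle this by exhibiting the bisimulation explicitly: relate $(s_i,1)$-type states of the program's model to the corresponding states of $M^{i_0}$, and inductively use the bisimulations furnished for the $\alpha_j$ on the subtrees, then verify Forth, Back, and Pre clause by clause. The finiteness of the tree guarantees the induction terminates.
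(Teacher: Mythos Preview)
Your proposal is correct and follows essentially the same route as the paper: induction on the height of the tree $G(N)$, with the base case handled by Proposition~\ref{s5prop}, the root submodel $M^{i_0}$ expressed as $L_{B^{i_0}}(\alpha_1,\ldots,\alpha_n)$, and the subtrees spliced into the argument slots via wrong learning $pre(s^0_i)|_{b}\beta$ combined by $\cap$. One small organizational difference: the paper attaches the subtree programs \emph{per agent} (each $\gamma_i=\alpha_i\cap(pre(s^0_i)|_{b^i_1}\beta_{i,1})\cap\cdots$ with singleton groups $\{b^i_l\}$), whereas you attach \emph{per child node} with a group $B^j$; the per-agent version makes the disjoint-group hypothesis of $\cap$ immediate (a fixed state $s^0_i$ reaches at most one child along any fixed agent, by the Euclidean property and closedness), so your claim that ``the groups $B^j$ for distinct children are disjoint'' is cleanest when read at the level of individual agents rather than whole child-node groups.
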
\begin{proof} See the Appendix.\end{proof}

\subsection{Comparing two Learning Operators}
In this part, we aim to compare our proposed learning operator
with the learning operator introduced in~\cite{kn:dit2}. We begin
with     the same example ``Lecture or Amsterdam" discussed
in~\cite{kn:dit2}.

\begin{quote} \emph{Anne and Bert are in a bar, sitting at a table.  A messenger  comes in
and delivers a letter addressed to Anne. The letter contains
either an invitation for a night  out in Amsterdam or an
obligation to give a lecture instead. Anne and Bert commonly know
that these are the only alternatives.}
\end{quote}

Consider the following information change scenario:

\begin{itemize}
\item (\textbf{spy-seeing}). Bert says good bye to Anne and leaves
the bar. During his leaving, he \emph{secretly} spies from the
window of the bar that whether Anne reads the letter, Anne does
not get aware that Bert spies on her, and wrongly thinks that she
is alone (Bert is not present) while she reads the letter.
\end{itemize}

 Suppose that  $p$
stands for ``Anna is invited for a night out in Amsterdam", and
also assume that in fact $p$ is true.

It is not possible to model the above information change scenario
in concurrent dynamic epistemic logic~\cite{kn:dit2}. But we can
model above scenario by the following basic learning programs.
\begin{center}
$L_b(L_a?p,L_a?\neg p)$.
\end{center}
The pointed action model associated to the above learning program
is $(N,s)$, where $S=\{s,t,v,u\}$,  $s\rightharpoonup_a v$,
$v\rightharpoonup_a v$, $t\rightharpoonup_a u$,
$u\rightharpoonup_a u$, and $s\rightharpoonup_b s$,
$s\rightharpoonup_b t$, $t\rightharpoonup_b s$,
$t\rightharpoonup_b t$, and  $pre(s)=pre(v)=p$,
$pre(t)=pre(u)=\neg p$.
\begin{center}
\scalebox{.6}{\includegraphics{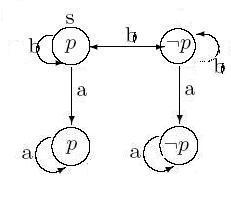}}

{\footnotesize{The pointed action model associated to
$L_b(L_a?p,L_a?\neg p)$}}
\end{center}
A candidate to describe the action ``spy-seeing" in formalization
presented in~\cite{kn:dit2} could be  $L_b( ! L_a?p\cup L_a?\neg
p)$. But the term $L_b( ! L_a?p\cup L_a?\neg p)$ is not a
well-formed action in concurrent dynamic epistemic logic since it
does not satisfy definition~6 of~\cite{kn:dit2}. In this
definition if $L_B\alpha$ is a well-formed action then
$gr(\alpha)\subseteq B$. Whereas, $gr(L_a?p\cup L_a?\neg
p)=\{a\}$, and $\{a\}\not\subseteq \{b\}$.

\noindent For another example, we discuss  the following
information change scenario:
\begin{itemize}
\item (\textbf{spy-reading}). Bert says good bye to Anne and
leaves the bar. using a hidden camera, Bert spies on   Anne when
she reads the letter, and Bert gets aware of the contents of the
letter using the camera. Anne wrongly thinks that she is alone,
and there is no spy on her.
\end{itemize}
Again, it is not possible to model the above information change
scenario in concurrent dynamic epistemic logic~\cite{kn:dit2}. But
we can model above scenario by the following basic learning
programs $L_b(L_a?p)$.

\begin{center}
\scalebox{.6}{\includegraphics{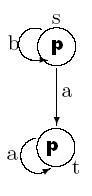}}

{\footnotesize{The pointed action model associated to
$L_b(L_a?p)$}}
\end{center}
The pointed action model associated to the above learning program
is $(N,s)$, where $S=\{s,t\}$, $s\rightharpoonup_b s$,
$s\rightharpoonup_a t$, $t\rightharpoonup_a t$, and
$pre(s)=pre(t)=p$.

Again note that the term $L_b(L_a?p)$ is not a well-formed action
in concurrent dynamic epistemic logic as $\{a\}\not\subseteq
\{b\}$.

We can also model scenarios introduced in~\cite{kn:dit2} via basic
learning programs as follows:

\begin{itemize}
\item[1-] (\textbf{tell}). Anne read the letter aloud: $L_{ab}?p$.
\begin{center}\scalebox{.6}{\includegraphics{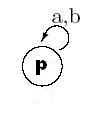}}

{\footnotesize{The pointed action model associated to $L_{ab}?p$}}
\end{center}

\item[2-] (\textbf{read}). Both Bert and Anne are present in the
bar, sitting on a table, and Bert is seeing that Anne reads the
letter: $L_{ab}(L_a?p, L_a?\neg p)$.
\begin{center}
\scalebox{.6}{\includegraphics{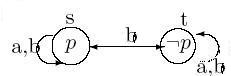}}

{\footnotesize{The pointed action model associated to
$L_{ab}(L_a?p, L_a?\neg p)$}}
\end{center}
\item[3-] (\textbf{mayread}) Bert orders a drink  at the bar so
that Anne may have read the letter (and actually Anne reads the
letter)

\begin{center}
$L_{ab}(\alpha,\beta,\gamma)$,
\end{center}
where \begin{itemize}

 \item $\alpha:= L_a?p\cap p|_b L_b?\top$ (Anne reads the letter, and in fact $p$ is true),

\item $\beta:= L_a?\neg p\cap \neg p|_b L_b?\top$ (Anne reads the
letter, and in fact $\neg p$ is true),

\item $\gamma:= L_a?\top\cap   L_b?\top$ (Anne does not read the
letter)
\end{itemize}
\begin{itemize}\item[-]
The pointed action model associated to $\alpha$, denoted by
$(N_1,s_1)$, is $S=\{s_1,t_1,v_1\}$, $s_1\rightharpoonup_a t_1$,
$s_1\rightharpoonup_b v_1$, $t_1\rightharpoonup_a t_1$,
$v_1\rightharpoonup_b v_1$, and $pre(s_1)=pre(t_1)=p$,
$pre(v_1)=\top$.

\item[-] The pointed action model associated to $\beta$, denoted
by $(N_2,s_2)$, is $S=\{s_2,t_2,v_2\}$, $s_2\rightharpoonup_a
t_2$, $s_2\rightharpoonup_b v_2$, $t_2\rightharpoonup_a t_2$,
$v_2\rightharpoonup_b v_2$, and $pre(s_2)=pre(t_2)=\neg p$,
$pre(v_2)=\top$.

\item[] The pointed action model associated to $\beta$, denoted by
$(N_3,s_3)$, is $S=\{s_3,t_3,v_3\}$, $s_3\rightharpoonup_a t_3$,
$s_3\rightharpoonup_b v_3$, $t_3\rightharpoonup_a t_3$,
$v_3\rightharpoonup_b v_3$, and $pre(s_3)=pre(t_3)=pre(v_3)=\top$.
\end{itemize}
 One may easily check that for every $i,j\in\{1,2,3\}$,
$(N_i,s_i)$ is $b$-bisimilar to  $(N_j,s_j)$. Also, for every
$i,j\in\{1,2,3\}$, $(N_i,s_i)$ is $a$-bisimilar to  $(N_j,s_j)$ if
and only if $i=j$.

\noindent Therefore,  the action model associated to
$L_{ab}(\alpha,\beta,\gamma)$ is $(N,1)$, where $S=\{1,2,3\}$,
$1\rightharpoonup_a 1$, $2\rightharpoonup_a 2$,
$3\rightharpoonup_a 3$, and  $1\rightharpoonup_b 1$,
$2\rightharpoonup_b 2$, $3\rightharpoonup_b 3$, and
$1\rightharpoonup_b 2$, $2\rightharpoonup_b 3$,
$3\rightharpoonup_b 1$, and $pre(1)=p$, $pre(2)=\neg p$, and
$pre(3)=\top$.

\item[4-] (\textbf{bothmayread}). Bert orders a drink at the bar
while Anne goes to the bathroom. Both may have read the letter
(and actually both of them have read).
\begin{center}
$L_{ab}(L_a ?p \cap L_b?p, L_a?\neg p\cap L_b?\neg p,L_a ?\top
\cap L_b?\top, L_a?p\cap p|_b L_b?\top, L_a \neg p\cap \neg p|_b
L_b?\top, L_b?p\cap p|_a L_a?\top, L_b \neg p\cap \neg p|_a
L_a?\top)$.
\end{center}

\end{itemize}

Four above information change scenarios are also modelled in
concurrent dynamic epistemic logic in example~7,
of~\cite{kn:dit2}. Also, in Figure~1 of~\cite{kn:dit2}, page~4,
Epistemic states  resulted from the execution of actions (for
these scenarios) described in concurrent dynamic epistemic logic
(example~7, of~\cite{kn:dit2}) is shown. It easy to verify that
\begin{quote}
\emph{Epistemic states  resulted from the execution of pointed
action models associated to basic learning programs for these
scenarios (introduced above) are exactly the same Epistemic states
resulted from the execution actions described in concurrent
dynamic epistemic logic~\cite{kn:dit2} (and shown in Figure~1 of
the same reference, in page~4).}
\end{quote}

\section{Learning by Recursion}\label{LBR}
By \emph{learning by recursion}, we mean the cases where agents
learn about each other's learning, i.e., an agent $a$ learns
something about learning of another agent $b$ and agent $b$ also
learns about learning of agent $a$.  In this way, a recursive
learning occurs. In this section, we introduce recursive learning
actions to model this type of learning.

In the following definition, $undf$, indicates that the
\emph{term} is undefined.

\begin{definition}\label{opterm}
Let $\Phi$ be a set of epistemic formulas over a set of atomic
formulas $P$ and a set of agents $A$, and $Var=\{X,Y,Z,...\}$ be a
set of variables. The set of \emph{open terms}
$\mathrm{OpenT}(\Phi)$ is defined as follows:

\begin{itemize}

\item[\emph{1}.] for all $X\in Var$, $X$ is an open term,  we let
$pre(X)=undf$ and $group(X)=undf$,

\item[\emph{2}.] for all $\varphi\in \Phi$, $?\varphi$ is an open
term, $group(?\varphi)=\emptyset$, and $pre(?\varphi)=\varphi$,

\item[\emph{3}.] for all $n\in \mathbb{N}$, and
$B\subseteq A$, if $\alpha_1,\alpha_2,...,\alpha_n$ are open
terms, then $L_B(\alpha_1,\alpha_2,...,\alpha_n)$ is an open
term, and we let
\begin{center}
$group(L_B(\alpha_1,\alpha_2,...,\alpha_n))=B\cup
group(\alpha_1)$, and

$pre(L_B(\alpha_1,\alpha_2,...,\alpha_n))=pre(\alpha_1)$
\end{center}
Note that the left sides are defined if $group(\alpha_1)$ and
$pre(\alpha_1)$ are defined.
\item[\emph{4}.] $\alpha_1\cap\alpha_2$ is an open term whenever
$\alpha_1$ and $\alpha_2$ are open terms. If both
$group(\alpha_1)$, $group(\alpha_2)$ are defined and
$group(\alpha_1)\cap group(\alpha_2)=\emptyset$, and if both
$pre(\alpha_1),pre(\alpha_2)$ are defined and
$pre(\alpha_1)=pre(\alpha_2)$. Then we let
$group(\alpha_1\cap\alpha_2)=group(\alpha_1)\cup group(\alpha_2)$
and $pre(\alpha_1\cap\alpha_2)=pre(\alpha_1)$,

\item[\emph{5}.] $\psi|_B\alpha_1$ is an open term whenever
$\alpha_1$ is an open term. If $group(\alpha_1)$ is defined and
$B\subseteq group(\alpha_1)$. We then define
$group(\psi|_B\alpha_1)= B$ and $pre(\psi|_B\alpha_1)=\psi$.

\end{itemize}
\end{definition}

\begin{definition}
Assume $\alpha(X_1,X_2,...,X_k)$ is an open term with $k$
variables, the tuple
\begin{center}
$(\beta_1,\beta_2,...,\beta_k)\in \mathrm{OpenT}(\Phi)$
\end{center}
is a suitable substitution for $\alpha(X_1,X_2,...,X_k)$, whenever
$\alpha(\beta_1,\beta_2,...,\beta_k)\in \mathrm{OpenT}(\Phi)$.
\end{definition}

Now we define \emph{open action models}, in order to be associated
to open terms. For each variable $X$, let $(N_X,s_X)$ be a
variable pointed action model, where $N_X=\langle
S^X,(\rightharpoonup^X_a)_{a\in A}, pre^X\rangle$ and the set
$S_X$, the relations $\rightharpoonup^X_a$, and the function
$pre^X$ are \emph{variables}. The \emph{open action model} of an
open term $\alpha(X_1,X_2,...,X_k)$ is simply constructed using
variable pointed action models $(N_{X_{1}},s_{X_{1}})$,
$(N_{X_{2}},s_{X_{2}})$, ..., and $(N_{X_{k}},s_{X_{k}})$ and
Definition~\ref{defsem}. Substituting an action model $(N,s)$ in
an open term $\alpha(X)$ is obtained by considering $s$ instead
of $s_X$.

\begin{example}{\em
Suppose $A=\{a,b\}$, and consider the open term $L_bL_a(X)$. The
open action model $(N_{L_bL_a(X)},s_{L_bL_a(X)})$  is constructed
as follows. We consider three states $s_0,s_1$ and $s_X$, where
$pre(s_0)=pre(s_1)=pre(s_X)$. The actual state is $s_0$, and the
accessibility relations are $s_0\rightharpoonup_b s_0$,
$s_0\rightharpoonup_a s_1$, $s_1\rightharpoonup_a s_1$, and ,
$s_1\rightharpoonup_bt$ for all $t\in b[s_X]$, where $b[s_X]$ is
the set of all states in the hypothetical model $N_X$ in which
$s_X$ has $b$-accessibility to them. The open action model of the
term $L_bL_a(X)$ is  illustrated in Figure~8.
\begin{center}\scalebox{.7}{\includegraphics{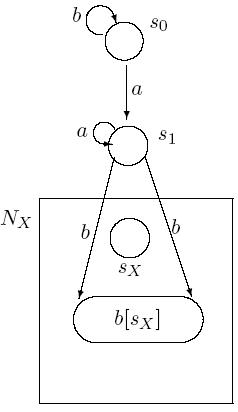}}

Figure~8
\end{center}

\item[1.] Let $(M,t_0)$ be the action model of the term
$L_{ab}(?\varphi)$, and $S=\{t_0\}$, $pre(t_0)=\varphi$,
$t_0\rightharpoonup_a t_0$ and $t_0\rightharpoonup_b t_0$.

\begin{center}\scalebox{.6}{\includegraphics{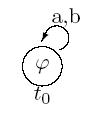}}

Figure~9
\end{center}
Substitution of $(M,t_0)$ for $(N_X,s_X)$ means to substitute
$t_0$ for $s_X$. Therefore, $pre(s_X)=pre(t_0)=\varphi$ and
$b[s_X]=b[t_0]=\{t_0\}$. Substituting $(M,t_0)$ in
$(N_{L_bL_a(X)},s_{L_bL_a(X)})$ yields to the following action
model
\begin{center}\scalebox{.6}{\includegraphics{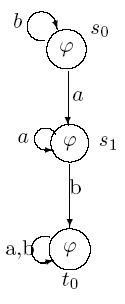}}

Figure~10
\end{center} which is the action model of
$L_aL_bL_{ab}(?\varphi)$.
}
\end{example}

\begin{example}\label{opfx}{\em
Suppose $A=\{a,b\}$, and consider the open term
$L_b(\varphi|_aL_a(\psi|_bL_b(X)))$. The open action model
$(N_{L_b(\varphi|_aL_a(\psi|_bL_b(X)))},s_{L_b(\varphi|_aL_a(\psi|_bL_b(X)))})$
is shown in Figure~11.
\begin{center}\scalebox{.6}{\includegraphics{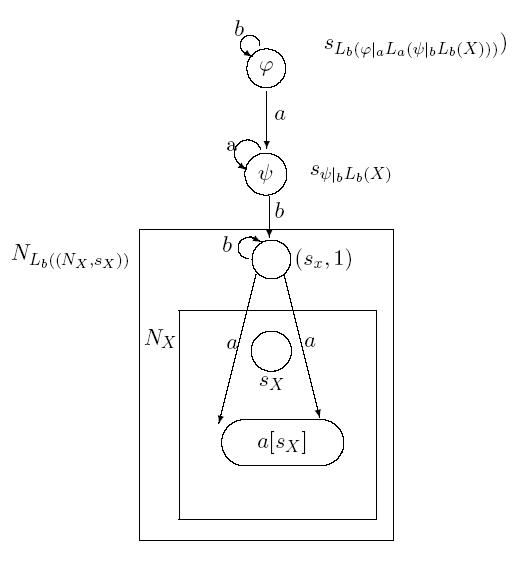}}

Figure~11
\end{center}

$(N_{L_b((N_X,s_X))},s_{L_b((N_X,s_X))})$ is obtained by adding
the new state $(s_X,1)$ as the actual state and adding new
accessibility relations $(s_X,1)\rightharpoonup_b (s_X,1)$, and
$(s_X,1)\rightharpoonup_a t$, for all $t\in a[s_X]$. We have
$pre((s_X,1))=pre(s_X)$, and $s_{L_b((N_X,s_X))}=(s_X,1)$.

$(N_{\psi|_bL_b(X)},s_{\psi|_bL_b(X)})$ is obtained by adding the
new state $s_{\psi|_bL_b(X)}$ to $N_{L_b((N_X,s_X))}$, and adding
new accessibility relations $s_{\psi|_bL_b(X)}\rightharpoonup_b
t$, for all $t\in b[s_{L_b((N_X,s_X))}]$. Moreover,
$pre(s_{\psi|_bL_b(X)})=\psi$.

Continuing the above scenario and using Definition~\ref{defsem},
the open action model in Figure~11 is constructed.

To obtain the open action model of
$L_b(\varphi|_aL_a(\psi|_bL_b(L_b(\varphi|_aL_a(\psi|_bL_b(X))))))$,
one may simply consider \emph{two} copies of the open action model
$(N_{L_b(\varphi|_aL_a(\psi|_bL_b(X)))}$ and replace the actual
state $s_{L_b(\varphi|_aL_a(\psi|_bL_b(X)))}$ of one of them for
$s_X$ in another one, and obtain the model shown in Figure~12. By
substituting $s_{L_b(\varphi|_aL_a(\psi|_bL_b(X)))}$ for $s_X$,
the set $a[s_x]$ is $\{s_{\psi|_bL_b(X)}\}$.

\begin{center}\scalebox{.6}{\includegraphics{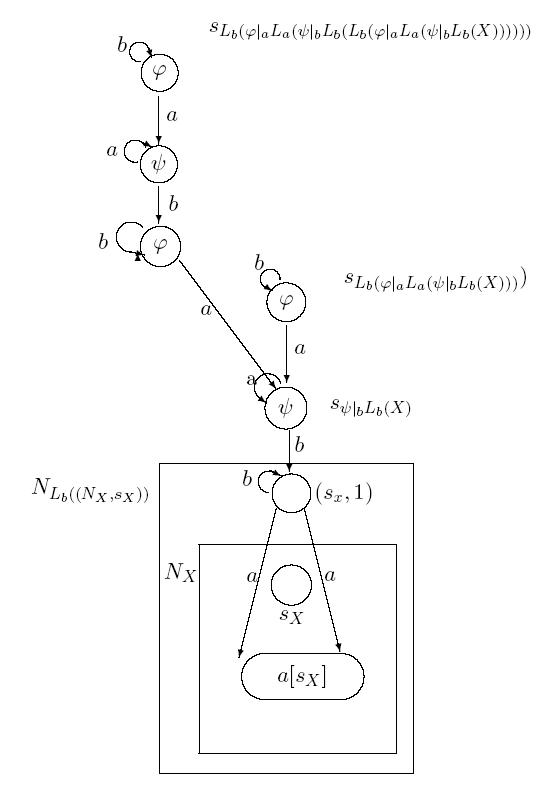}}

Figure~12
\end{center}

\textbf{Fixed point of} $L_b(\varphi|_aL_a(\psi|_bL_b(X)))$. To
construct an action model as a \emph{fixed point} of the term
$L_b(\varphi|_aL_a(\psi|_bL_b(X)))$, we consider the open action
model $(N_{L_b(\varphi|_aL_a(\psi|_bL_b(X)))}$ and replace its
actual state $s_{L_b(\varphi|_aL_a(\psi|_bL_b(X)))}$ for $s_X$
(see Figure~13, the symbol $s_{\mu} X$ will be defined in the next
section).
\begin{center}
\scalebox{.6}{\includegraphics{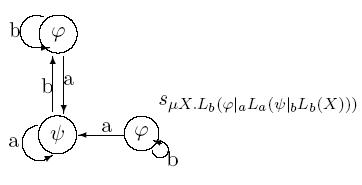}}

Figure~13
\end{center}

It is easy to verify that the pointed action model in Figure~13 is
bisimilar to the pointed action model $(N_3,s)$ explained in
Example~\ref{rexm} as a recursive learning action.}
\end{example}

\subsection{Recursive Learning Programs}
One may check that the graph of the action model shown in Figure~3
is not a tree and thus by Theroem~\ref{prtree1}, it is not a basic
learning action. So we cannot describe it in terms of alternative
learning, $L_B(-,-,...,-)$, concurrent learning $\cap$, and wrong
learning, $|_B$, operators.  We add a new operator $\mu$ to the
language for recursive learning and show that the action model
shown in Figure~3 is a recursive learning action. To do this, we
need to slightly modify the Definition of open terms~\ref{opterm}
as follows.
\begin{definition}
Let $\Phi$ be a set of epistemic formulas over a set of atomic
formulas $P$ and a set of agents $A$, and $Var=\{X,Y,Z,...\}$ be a
set of variables. The set of \emph{ open terms},
$\mathrm{OT}(\Phi)$ is defined as follows:

\begin{itemize}

\item[\emph{1}.] for all $X\in Var$, $X$ is an  open term, we let
$pre(X)=undf$ and $group(X)=undf$,

\item[\emph{2}.] for all $\varphi\in L(\Phi)$, $?\varphi$ is an
open term, $group(?\varphi)=\emptyset$, and
$pre(?\varphi)=\varphi$,

\item[\emph{3}.] for all $n\in \mathbb{N}$, and
$B\subseteq A$, if $\alpha_1,\alpha_2,...,\alpha_n$ are open terms
then $L_B(\alpha_1,\alpha_2,...,\alpha_n)$ is an open term, and
we let
\begin{center}
$group(L_B(\alpha_1,\alpha_2,...,\alpha_n))=B\cup
group(\alpha_1)$ and,

$pre(L_B(\alpha_1,\alpha_2,...,\alpha_n))=pre(\alpha_1)$
\end{center}
Note that the left sides are defined if $group(\alpha_1)$ and
$pre(\alpha_1)$ are defined.
\item[\emph{4}.] $\alpha_1\cap\alpha_2$ is an open term whenever
$\alpha_1$ and $\alpha_2$ are open terms. If both
$group(\alpha_1)$, $group(\alpha_2)$ are defined and
$group(\alpha_1)\cap group(\alpha_2)=\emptyset$, and if both
$pre(\alpha_1),pre(\alpha_2)$ are defined and
$pre(\alpha_1)=pre(\alpha_2)$. Then we let
$group(\alpha_1\cap\alpha_2)=group(\alpha_1)\cup group(\alpha_2)$
, and $pre(\alpha_1\cap\alpha_2)=pre(\alpha_1)$,

\item[\emph{5}.] $\psi|_B\alpha_1$ is an open term whenever
$\alpha_1$ is an open term. If $group(\alpha_1)$ is defined and
$B\subseteq group(\alpha_1)$, We define $group(\psi|_B\alpha_1)=
B$ and $pre(\psi|_B\alpha_1)=\psi$.

\item[\emph{6}.] If $\alpha(X_1,X_2,...,X_k)$ is an open term,
and both
\begin{center}
$group(\alpha(X_1,X_2,...,X_k))$ and\\
$pre(\alpha(X_1,X_2,...,X_k))$
\end{center}
are defined, then $\mu X_1.\alpha(X_1,X_2,...,X_k)$ is an open
term with $k-1$ free variables; the variable $X_1$ is bound under
$\mu X_1$. We define
\begin{center}
$group(\mu
X.\alpha(X_1,X_2,...,X_k))=group(\alpha(X_1,X_2,...,X_k))$ and\\
$pre(\mu X.\alpha(X_1,X_2,...,X_k))=pre(\alpha(X_1,X_2,...,X_k))$.
\end{center}
\end{itemize}

A term $\alpha\in\mathrm{OT}(\Phi)$ is \emph{closed} if it has no
unbounded variable.
\end{definition}

\begin{definition}
Let $\Phi$ be a set of epistemic formulas.  The set of \emph{
recursive learning programs}, $\mathrm{RLP}(\Phi)$, is the set of
all closed terms in $\mathrm{OT}(\Phi)$.
\end{definition}

\begin{definition}\label{fxp}({\bf Semantics of $\mathbf{\mu
X.\alpha(X})$}). The associated pointed action model to  $\mu
X.\alpha(X)$ is obtained by replacing the actual state
$s_{\alpha(X)}$ of the action model $N_{\alpha(X)}$ for the state
$s_X$.
\end{definition}

The associated action model of $\mu X.\alpha(X)$ is actually a
{\em fixed point} of the open action model of $\alpha(X)$, see
Example~\ref{opfx}.

\begin{example}{\em The semantics of the  recursive
learning program
\begin{center} $\alpha:=\mu X_1. L_c(\chi|_a\mu
X_2.L_a(\varphi|_bL_b((\psi|_aX_2)\cap(\psi|_cX_1))))$
\end{center}
is constructed in the following way. The model of the
open term
\begin{center}$L_a(\varphi|_bL_b((\psi|_aX_2)\cap(\psi|_cX_1)))$
\end{center} is shown
in Figure~14.
\begin{center}
\scalebox{.6}{\includegraphics{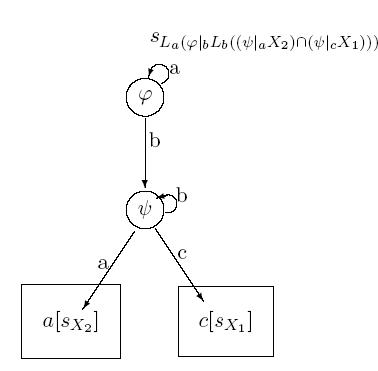}}

Figure~14
\end{center}
The model of the open term
\begin{center}
$\mu X_2.L_a(\varphi|_bL_b((\psi|_aX_2)\cap(\psi|_cX_1)))$
\end{center}
is shown in Figure~15.
\begin{center}
\scalebox{.6}{\includegraphics{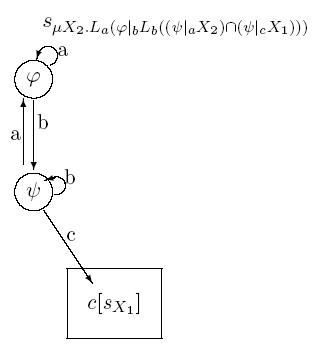}}

Figure~15
\end{center}
The action model of
\begin{center}
$L_c(\chi|_a\mu
X_2.L_a(\varphi|_bL_b((\psi|_aX_2)\cap(\psi|_cX_1))))$
\end{center}
is illustrated in Figure~16.
\begin{center}
\scalebox{.6}{\includegraphics{fig16.jpg}}

Figure~16
\end{center}
Finally the action model of the recursive learning program
$\alpha$ is shown in Figure~17.
\begin{center}
\scalebox{.6}{\includegraphics{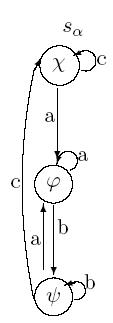}}

Figure~17
\end{center}
 }
\end{example}

As two other examples, one may check that the action model
associated to the recursive learning program
\begin{center}$\mu X.L_a(\varphi|_b\mu
Y. L_b(\psi|_aX\cap\psi|_cL_c(\theta|_bY)))$
\end{center}
is the pointed action model illustrated in Figure~18,

\begin{center}
\scalebox{.6}{\includegraphics{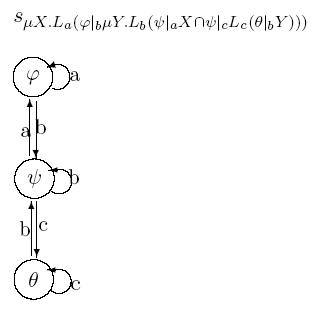}}

Figure~18
\end{center}
and the action model associated to the recursive learning program
\begin{center}
$\mu X.
L_a(\varphi|_bL_b(\varphi|_aX),\psi|_bL_b(\varphi|_aX))$
\end{center}
is bisimilar to the pointed action model $(N,s)$ illustrated in
Figure~19.

\begin{center}
\scalebox{.6}{\includegraphics{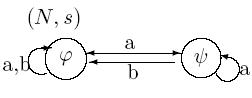}}

Figure~19
\end{center}

\subsection{Two Main Theorems}
In this subsection, we present two main theorems of the paper. In
the first one, Theorem \ref{main}, we show that every finite $K45$
action model is associated to some recursive learning program and
conversely, for every learning program, there is a finite $K45$
action model associated with it. Then we introduce a hierarchy
over learning programs with respect to the number of recursive
learning operators. In our second main Theorem \ref{main2}, it is
shown that the hierarchy of the learning programs is strict, i.e.,
it is not possible to describe all $K45$ action models by a
determined finite number of recursion in learning. That means that
the hierarchy does not collapse.

\subsubsection{Representing Epistemic Action Models}\label{rep}

\begin{definition}\label{closeds}
Let $N=\langle S,(\rightharpoonup_a)_{a\in A}, pre\rangle$ be a
$K45$ action model. For each agent $a\in A$, an $a$-component of
$N$ is $M_a=\langle S',(\rightharpoonup'_a)_{a\in \{a\}},
pre'\rangle$, where $M_a$ is an $S5$ closed connected submodel of
$N$.
\begin{itemize}
\item[]  Assume for all $a\in A$, $M^i_a=\langle
S^i_a,(\rightharpoonup^i_a)_{a\in \{a\}}, pre^i_a\rangle$, $1\leq
i\leq k_a$ are all different $a$-component of $N$. We construct
the action model $T'(N)=\langle S',(\rightharpoonup'_a)_{a\in A},
pre'\rangle$ as follows:
\begin{itemize}
\item $S'=\bigcup_{a\in A} \bigcup_{1\leq i\leq
k_a}(S^i_a\times\{(i,a)\})$,

\item $(s,(i,a))\rightharpoonup'_b (t,(j,b))$ if and only if
either $i=j$, $a=b$ and $s\rightharpoonup^i_a t$, or $i\neq j$,
$b\neq a$, and $s\rightharpoonup_b t$ (in the action model $N$),

\item $pre'((s,(i,a)))=pre^i_a(s)$.
\end{itemize}
\item[] The \emph{projection} of an state $(s,(i,a))$ in the
action model $T'(N)$, denoted by $\Pi((s,(i,a)))$, is defined to
be the state $s$ in the action model $N$.

\end{itemize}
\end{definition}

\begin{example}{\em
Consider the action models $N_1$ and $N_2$ in Figure~20. Their
action models $T'(N_1)$ and $T'(N_2)$ are illustrated in Figure~20
as well.
\begin{center}
\scalebox{.6}{\includegraphics{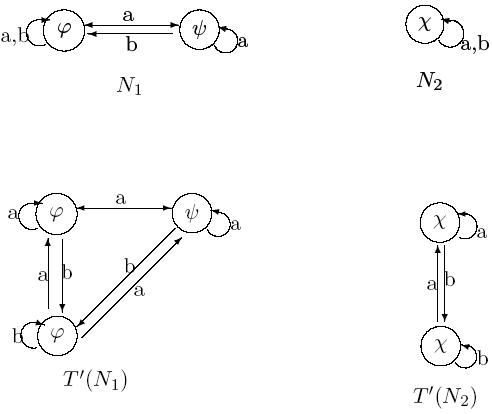}}

Figure~20
\end{center}
}
\end{example}

\begin{proposition}
Let $N=\langle S,(\rightharpoonup_a)_{a\in A}, pre\rangle$ be an
$K45$ action model. Then $(N,s_0)$ is bisimilar to $(T'(N),w_0)$
for all $s_0$ and $w_0$, in which the state $s_0$ is the
projection of $w_0$.
\end{proposition}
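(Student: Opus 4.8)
The plan is to mimic the proof of Proposition~\ref{ddd} (which handles the analogous statement for $T(N)$), exhibiting an explicit bisimulation between $(N,s_0)$ and $(T'(N),w_0)$ via the projection map. First I would define the relation $\mathcal{R}\subseteq S'\times S$ by $\mathcal{R}(w,s)$ iff $\Pi(w)=s$, i.e., $\mathcal{R}((s,(i,a)),s)$ for every state $(s,(i,a))$ of $T'(N)$. The Initial clause is immediate by the hypothesis that $s_0$ is the projection of $w_0$, and the Pre clause is immediate because $pre'((s,(i,a)))=pre^i_a(s)=pre(s)$ by construction (the $a$-components inherit $pre$ from $N$), so $pre$ of corresponding states is literally equal, hence $KD45$-equivalent.

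The substance is in the Forth and Back clauses. For Forth, suppose $\mathcal{R}((s,(i,a)),s)$ and $(s,(i,a))\rightharpoonup'_b (t,(j,b))$; by the definition of the accessibility relation in $T'(N)$ this means either ($i=j$, $a=b$, $s\rightharpoonup^i_a t$) or ($i\neq j$, $b\neq a$, $s\rightharpoonup_b t$ in $N$). In the first case $\rightharpoonup^i_a$ is the restriction of $\rightharpoonup_a$ to the $a$-component $S^i_a$, so $s\rightharpoonup_b t$ in $N$; in the second case we have $s\rightharpoonup_b t$ directly. Either way $\Pi((t,(j,b)))=t$ and $s\rightharpoonup_b t$, so $\mathcal{R}((t,(j,b)),t)$ witnesses the clause. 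For Back, suppose $\mathcal{R}((s,(i,a)),s)$ and $s\rightharpoonup_b t$ in $N$; I need a state $w'$ of $T'(N)$ with $\Pi(w')=t$ and $(s,(i,a))\rightharpoonup'_b w'$. The key observation is that $s$ lies in the $a$-component indexed by $i$. If $b=a$, then since $M^i_a$ is a \emph{closed} submodel and $s\in S^i_a$ with $s\rightharpoonup_a t$, we get $t\in S^i_a$; moreover $s\rightharpoonup^i_a t$, so $(s,(i,a))\rightharpoonup'_a (t,(i,a))$ as required. If $b\neq a$, then $t$ lies in some $b$-component, say the one indexed by $j$ (every state of $N$ belongs to at least one $b$-component, since a single reflexive point forms a closed connected $S5$ $b$-submodel by the $K45$ properties — here I use that $K45$ plus the applicable-formula convention makes each agent's relation behave like an equivalence on its reachable part); then $i\neq j$ is forced because the index sets of $a$-components and $b$-components are disjoint in $S'$, $b\neq a$, and $s\rightharpoonup_b t$ in $N$, so $(s,(i,a))\rightharpoonup'_b (t,(j,b))$ as required.

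The main obstacle I anticipate is the Back clause in the case $b\neq a$: I must be sure that $t$ genuinely belongs to some $b$-component of $N$ and pick the right index, and that the ``$i\neq j$'' disjointness condition of Definition~\ref{closeds} is automatically satisfied. This rests on the structural fact — which should already be available from the discussion of $K45$ models and Definition~\ref{kddef}, and is the analogue of what is used in Proposition~\ref{ddd} — that for a $K45$ action model the restriction of each $\rightharpoonup_b$ to $\{u : s\rightharpoonup_b u\}$ (together with $s$ itself when $s$ has a $b$-successor, or the singleton $\{s\}$ otherwise) is a closed connected $S5$ $b$-submodel, so every state is covered and the covering $b$-components partition the relevant part of $S$. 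Once that is in hand, the remaining verifications are the routine case checks above, and the proof closes exactly as the proof of Proposition~\ref{ddd}.
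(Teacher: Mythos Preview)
Your approach is exactly the paper's: the paper simply says the proof is ``similar to the proof of Proposition~\ref{ddd}'', and you carry that out via the projection relation $\mathcal{R}$ with the same Forth/Back case split. One small correction to your Back clause in the case $b\neq a$: it is \emph{not} true in a $K45$ model that every state lies in some $b$-component (a state with no $b$-successor lies in none, since $b$-components are $S5$ and hence reflexive); what you actually need is the specific fact that $t$ does, and this follows from Lemma~\ref{lem23} (Euclidean gives $t\rightharpoonup_b t$ from $s\rightharpoonup_b t$), which is precisely the ingredient used at the analogous point in the proof of Proposition~\ref{ddd}. With that adjustment the argument is complete.
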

\begin{proof} The proof is similar to the proof of
Proposition~\ref{ddd}.
\end{proof}

\begin{theorem}\label{main}
All finite epistemic actions are recursive learning programs,
i.e.,
\begin{center}
$\mathrm{FAct}(\Phi)=\mathrm{RLP}(\Phi)$.
\end{center}
\end{theorem}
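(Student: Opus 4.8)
The plan is to prove the two set inclusions separately; the inclusion $\mathrm{RLP}(\Phi)\subseteq\mathrm{FAct}(\Phi)$ is routine while the reverse one carries the content. For $\mathrm{RLP}(\Phi)\subseteq\mathrm{FAct}(\Phi)$ I would show, by induction on the structure of a closed term $\alpha\in\mathrm{OT}(\Phi)$, that the associated pointed action model $(N_\alpha,s_\alpha)$ is a \emph{finite} $K45$ model whose $pre$-image lies in $\Phi$. The case $?\varphi$ is immediate (one state, all relations empty, hence vacuously transitive and Euclidean). For $L_B$, $\cap$ and $|_B$ one reads the construction off Definition~\ref{defsem} and checks transitivity and the Euclidean property at the finitely many freshly added states; the point worth noting is that for $L_B$ one uses that agent-bisimilarity of states of a $K45$ model is symmetric and transitive, so that the clause ``$(s_i,1)\rightharpoonup_b(s_j,1)$ iff $s_i$ is $b$-bisimilar to $s_j$'' really does define a transitive Euclidean relation. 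The genuinely new case is $\mu X.\alpha(X)$: by Definition~\ref{fxp} no new state is created, the placeholder $s_X$ being merely identified with the root $s_{\alpha(X)}$, so finiteness is preserved, and transitivity and the Euclidean property survive this identification because every edge that formerly entered the $X$-slot is redirected to a state whose outgoing structure is, by the way the open action model of $\alpha(X)$ was built, exactly the outgoing structure the $X$-slot was standing for.

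For the reverse inclusion let $(N,s)$ be a finite $K45$ action model. First pass to the bisimilar model $(T'(N),w_0)$ furnished by the Proposition immediately preceding this Theorem, so that the model is presented as a family of single-agent $S5$ closed connected submodels — the $a$-components — together with the cross-component accessibilities, all recorded by the finite directed graph $G$ whose nodes are the $a$-components of $N$ and whose edges are the $T'(N)$-accessibilities between distinct components. If $G$ is a tree, then $(N,s)$ is a basic learning action (Theorem~\ref{prtree2}), and since every basic learning program is a closed, variable-free term of $\mathrm{OT}(\Phi)$ it is in particular a recursive learning program and we are done. In general $G$ is an arbitrary finite digraph, and the idea is to unfold $G$ into a tree decorated with back-edges and read off a $\mu$-term.

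Concretely, I would run a depth-first traversal of $G$ from the component containing $w_0$. Each node $M$ of $G$ is a single-agent $S5$ pointed model, hence by Proposition~\ref{s5prop} it is bisimilar to the action model of some basic learning program; its outgoing $G$-edges are then realised, one at a time, by the wrong-learning operator $|_B$ (this being the operator that introduces the intermediate states needed to carry those cross-component accessibilities), bundling several of them, when $M$ has several successors, by the alternative and concurrent operators. An edge of $G$ to a not-yet-visited component is replaced by a recursive call on the corresponding subterm; an edge to a component currently on the DFS stack is replaced by an occurrence of a fresh variable $X_M$; and when the DFS finishes the subtree rooted at a component $M$ to which such back-edges point, that subterm is enclosed in $\mu X_M.(\,\cdot\,)$. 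Since the traversal is depth-first, every back-edge points to an ancestor, so every occurrence of every $X_M$ is eventually bound and the outcome is a closed term $\alpha\in\mathrm{OT}(\Phi)$, i.e.\ a recursive learning program. One then exhibits a bisimulation between $(N_\alpha,s_\alpha)$ and $(T'(N),w_0)$ — hence between $(N_\alpha,s_\alpha)$ and $(N,s)$ — that matches a state of $N_\alpha$ with the $G$-component from which it was produced, using inside each component the bisimulation of Proposition~\ref{s5prop}, and for which the fixed-point clause of Definition~\ref{fxp} is exactly what makes the $X_M$-loops reconnect to $M$.

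The main obstacle is this final bisimulation, and inside it two points in particular. First, one must check that the $\mu$-identification closes each directed cycle of $G$ at exactly the right place — neither gluing together states that $T'(N)$ keeps apart nor leaving a loop dangling — and this is where the precise wording of Definition~\ref{fxp} has to be combined with the DFS property that back-edges lead to ancestors. Second, a component of $G$ reachable along several distinct DFS paths is duplicated in $(N_\alpha,s_\alpha)$, and one has to check that all of its copies are pairwise bisimilar to the original component — again via Proposition~\ref{s5prop} and the tree case, Theorems~\ref{prtree1} and~\ref{prtree2} — so that the many-to-one map ``state of $N_\alpha\mapsto$ originating component'' is nonetheless a bisimulation. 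Tracking how the single-agent $S5$ interiors of the components interlock with the multi-agent gluing edges — in particular that the states introduced by the $|_B$ operators land in the intended components and carry exactly the prescribed accessibilities — is where the verification is heaviest, and that interplay between the $S5$ interior of a component and the $K45$ joints between components is, I expect, the crux of the whole argument.
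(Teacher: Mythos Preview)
Your proposal is correct and follows essentially the same route as the paper: pass to $T'(N)$, view the single-agent components as the nodes of a finite digraph, and read off a $\mu$-term from that digraph, with variables marking the returns to ancestors. The only cosmetic difference is that you phrase the tree-ification as a depth-first search with back-edges, whereas the paper unwinds the digraph to an infinite tree and then cuts at the first repeat along each root-to-leaf path; these are the same construction in different clothing. Two minor remarks: the paper handles each $a$-component directly by $L_a(?pre(v_1),\ldots,?pre(v_m))$ rather than invoking Proposition~\ref{s5prop} (a single-agent connected $S5$ model is just a clique, so the full strength of that proposition is not needed), and your DFS description should say explicitly that edges to already-visited components \emph{not} on the stack are handled by re-traversal (hence duplication) --- you clearly know this, since you flag the resulting many-to-one bisimulation as an obstacle, but the construction paragraph only covers tree edges and back edges.
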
 \begin{proof} See the Appendix.\end{proof}

\begin{example}{\em It is shown in Figure~21, how to construct a
program for the pointed action model $(N_1,s)$ through the
instruction argued in proof of the Theorem~\ref{main}.

\begin{center}
\scalebox{.6}{\includegraphics{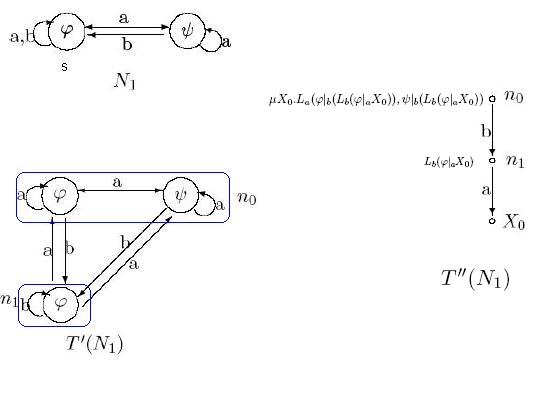}}

Figure~21
\end{center}
}
\end{example}

\subsubsection{A Hierarchy of Learning}

\begin{definition}
For each $k\in\mathbb{N}$, we define the class $\mathrm{kRLP}$, of
all finite $K45$ pointed action models which can be described by a
recursive learning program with at most $k$ times of
\emph{dependent} use of the recursive operator $\mu$.
\end{definition}
The term `\emph{dependent}' in the above definition is crucial. In
a program, two operators $\mu X$ and $\mu Y$ are called to be
dependent if it is not possible to use one variable for both
operators and achieve the same action model.

For example, in the program,
\begin{center}
$(\chi|_a \mu X. L_a(\varphi|_b L_b(\psi|_aX)))\cap
(\chi|_b\mu Y. L_b(\psi|_aL_a(\varphi|_bY))$,
\end{center}
the operators $\mu X$ and $\mu Y$ are independent. Note that the
program
\begin{center}$(\chi|_a \mu Z. L_a(\varphi|_b L_b(\psi|_aZ)))\cap
(\chi|_b\mu Z. L_b(\psi|_aL_a(\varphi|_bZ))$,
\end{center}
describes the same action model. In contrast with the above
example, in the following program,
\begin{center}
$\mu X.L_a(\varphi|_b\mu Y.
L_b(\psi|_aX\cap\psi|_cL_c(\theta|_bY)))$,
\end{center}
the operators $\mu X$ and $\mu Y$ are dependent.

It is easy to observe that the class $\mathrm{0RLP}$ is the class
of all basic learning programs, $\mathrm{BLP}$. We also wish to
name $\mathrm{1RLP}$ as the class of primitive recursive learning
programs and denote it also by $\mathrm{PRLP}$.

To prove the next theorem, we need to clarify some notions in
graph theory. Let $G=(N,E)$ be a directed graph. A {\em simple
loop} $L$ in the graph $G$ is a sequence of nodes $\langle s_0,
s_1, s_2,...,s_n\rangle$, such that for all $i<n$,
$(s_i,s_{i+1})\in E$, $(s_n,s_0)\in E$, and for all $0\leq
i,j\leq n$, if $i\neq j$ then $s_i\neq s_j$. We call $s_0$ the
\emph{start-point} of the simple loop $L$. Let $L=\langle s_0,
s_1, s_2,...,s_n\rangle$ and $L'=\langle s'_0, s'_1,
s'_2,...,s'_m\rangle$ be two simple loops. We say $L'$ is {\em
connected} to $L$ by its start-point, if there exists $0\leq t\leq
n$, $s_t\neq s_0$ and $s_t=s'_0$, and for all $0\leq i\leq n$ and
$0\leq j\leq m$ if $i\neq t$ and $j\neq 0$, then $s_i\neq s_j$. A
$k$-{\em nested loop} is a sequence $\langle
L_1,L_2,...,L_k\rangle$ of simple loops, such that for each
$i>1$, $L_i$ is connected to $L_{i-1}$ by its start-point.

\begin{theorem}\label{main2}
We have the following hierarchy of recursive learning programs
\begin{center}
$\mathrm{BLP}\varsubsetneq \mathrm{PRLP}\varsubsetneq
\mathrm{2RLP}\varsubsetneq ...\varsubsetneq
\mathrm{kRLP}\varsubsetneq \mathrm{(k+1)RLP}\varsubsetneq...$.
\end{center}
Moreover, it does not collapse to $\mathrm{RLP}$, i.e, for all
$k$,
\begin{center}
$\mathrm{kRLP}\neq \mathrm{RLP}$,
\end{center}
\end{theorem}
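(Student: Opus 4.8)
The plan is to prove the hierarchy is strict by exhibiting, for each $k$, a finite $K45$ action model $N_k$ that lies in $\mathrm{(k+1)RLP}$ but not in $\mathrm{kRLP}$, and then observing that a fixed $N_{k+1}$ escapes $\mathrm{kRLP}$ for every $k$, so no $\mathrm{kRLP}$ equals $\mathrm{RLP}$. The natural witnesses are action models whose directed graph $G(N)$ (in the sense of Definition~\ref{closef}) contains a $k$-nested loop but no $(k+1)$-nested loop. The intuition, already visible in the examples after Definition~\ref{fxp} (Figures~17--19), is that one dependent application of $\mu$ produces one simple loop in the graph, a nesting of two dependent $\mu$'s produces a loop connected to another loop by its start-point, and so on: the number of dependent $\mu$ operators needed is exactly the nesting depth of loops in $G(N)$. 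So the first step is to make this correspondence precise as a lemma: \emph{a finite $K45$ pointed action model $(N,s)$ belongs to $\mathrm{kRLP}$ if and only if $G(N)$ contains no $(k+1)$-nested loop} (equivalently, the maximal length of a nested-loop chain in $G(N)$ is at most $k$). The ``only if'' direction is proved by structural induction on recursive learning programs, tracking how each constructor ($?\varphi$, $L_B$, $\cap$, $|_B$, and $\mu X$) affects the loop structure of the associated graph, using Theorem~\ref{prtree1}/Theorem~\ref{prtree2} for the base case ($0RLP = \mathrm{BLP}$ $\iff$ tree $\iff$ no $1$-nested loop). The ``if'' direction refines the construction in the proof of Theorem~\ref{main}: when the graph has nested-loop depth $k$, the program produced by that construction uses $\mu$ operators only along the nested-loop chains, and dependent ones only where loops are nested, so at most $k$ dependent $\mu$'s suffice.

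Second, I would construct the explicit witness $N_k$. Take agents $a_1,\dots,a_{k+1}$ and build an action model whose $T'(N_k)$-graph is a single chain $L_1, L_2, \dots, L_k$ of simple loops, each $L_{i+1}$ attached to $L_i$ at a single node, with preconditions chosen so that no two of the $S5$-components can be identified (as in the ``mayread'' example, where distinct preconditions $p, \neg p, \top$ force the components apart). Concretely one can iterate the pattern of the program $\mu X.L_a(\varphi|_b\mu Y.L_b(\psi|_aX\cap\psi|_cL_c(\theta|_bY)))$ from the Figure~18 example, which already realizes a $2$-nested loop; generalizing the nesting to depth $k$ gives a program in $\mathrm{kRLP}$ whose action model has a $k$-nested loop, and by the lemma it is not in $\mathrm{(k-1)RLP}$. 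That $N_k \in \mathrm{kRLP}$ is immediate from the explicit program; that $N_k \notin \mathrm{(k-1)RLP}$ is the lemma applied in the direction ``$(k-1)RLP \Rightarrow$ no $k$-nested loop.''

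Third, for the non-collapse, note that the witnesses $N_1, N_2, N_3,\dots$ have graphs of unbounded nested-loop depth; for each fixed $k$, the model $N_{k+1}$ has a $(k+1)$-nested loop, hence by the lemma $N_{k+1} \notin \mathrm{kRLP}$, while $N_{k+1} \in \mathrm{RLP}$ by Theorem~\ref{main}. Therefore $\mathrm{kRLP} \subsetneq \mathrm{RLP}$ for every $k$, and combining with the strict inclusions $\mathrm{kRLP} \subsetneq \mathrm{(k+1)RLP}$ (witnessed by $N_{k+1}$) yields the full statement.

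The main obstacle is the lemma itself — in particular the invariance direction: showing that \emph{no} recursive learning program with fewer than $k$ dependent $\mu$'s can describe $N_k$ up to bisimilarity. This requires arguing that bisimulation of action models preserves the nested-loop depth of the graph $G(\cdot)$ (so that one cannot ``cheat'' by finding a bisimilar model with a simpler graph), and that each independent $\mu$ contributes at most an isolated loop while only dependent $\mu$'s can increase nesting depth. The delicate points are handling the $\cap$ and $|_B$ constructors — which can glue loops side by side without nesting them — and making precise, via the definitions of \emph{dependent} $\mu$ operators and of \emph{connected} simple loops, that reusing a single bound variable for two $\mu$'s exactly corresponds to two loops sharing \emph{no} node other than via a start-point attachment. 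Once the bookkeeping on $G(T'(N))$ is set up carefully, the counting argument is routine; I expect most of the work to be in stating and proving this graph-theoretic invariant cleanly.
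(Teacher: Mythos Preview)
Your overall strategy matches the paper's: establish that $k$ dependent uses of $\mu$ yield at most a $k$-nested loop in the associated graph, then exhibit witnesses whose graphs have a $k$-nested loop and which therefore cannot lie in $\mathrm{(k-1)RLP}$. The paper's proof does exactly this, with $\alpha^1=\mu X.\,L_b(\varphi|_aL_a(\psi|_bL_b(X)))$ and its iterated generalizations as the witnesses.

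There is, however, a genuine gap in your proposed central lemma. The biconditional ``$(N,s)\in\mathrm{kRLP}$ iff $G(N)$ has no $(k{+}1)$-nested loop'' is false as stated, because nested-loop depth of $G(\cdot)$ is \emph{not} a bisimulation invariant. Take $(N_3,s)$ from Figure~3 with $\varphi=\psi$: then $s$ and $t$ become bisimilar, $(N_3,s)$ collapses to a single reflexive state described by $L_{ab}(?\varphi)\in\mathrm{BLP}=\mathrm{0RLP}$, yet $G(N_3)$ still has a $1$-nested loop. So you cannot hope to prove ``bisimulation preserves nested-loop depth''; that step would fail. The paper sidesteps this by never claiming the general lemma. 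Instead it (i) proves only the forward direction (a program with $k$ dependent $\mu$'s has associated graph with at most $k$-nested loops), (ii) constructs witnesses in which \emph{no two states are bisimilar} (by choosing pairwise inequivalent preconditions), and (iii) argues the lower bound by a direct bisimulation-chasing: if the witness were bisimilar to some finite $M'$ whose graph has fewer nested loops, then following the back-and-forth along the loop forces an infinite sequence of pairwise distinct states in $M'$, a contradiction.

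Your plan is easily repaired along these lines: drop the ``if'' direction of the lemma (you never actually use it, since membership of $N_k$ in $\mathrm{kRLP}$ comes from the explicit program), and replace the bisimulation-invariance claim by the paper's direct argument applied to witnesses with pairwise non-bisimilar states. Everything else you wrote --- the role of Theorem~\ref{prtree1}, the construction pattern from Figure~18, the non-collapse argument --- is correct and aligns with the paper.
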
\begin{proof} See the Appendix.\end{proof}

\section{Concluding Remarks and Further Work}\label{CRFW}

\subsection{Related Works}\label{scon}

We may compare \emph{epistemic learning programs} with other
approaches, like \emph{concurrent dynamic epistemic
logic}~\cite{kn:dit2} and \emph{epistemic programs}~\cite{kn:bal}.

In \emph{concurrent dynamic epistemic logic}~\cite{kn:dit2}, an
epistemic action is interpreted as a \emph{relation} between $S5$
epistemic states and sets of $S5$ epistemic states. There are two
main differences between the interpretation of epistemic action in
concurrent dynamic epistemic logic  and epistemic learning
programs.

\begin{itemize}

\item[1.] An epistemic action in concurrent dynamic epistemic
logic is a \emph{relation} between epistemic states whereas in
\emph{epistemic learning programs}, it is a \emph{function} from
epistemic states to epistemic states.

\item[2.] Concurrent dynamic epistemic logic is just about $S5$
models whereas epistemic learning programs also considers $K45$
models.

\end{itemize}

Another difference is in the interpretation of the notion of
learning. Our learning operator is an operator on action models,
and  $L_B(\alpha_1,\alpha_2,...,\alpha_k)$ is a new action model,
expressing the condition that agents in $B$ learn that an action
among $\alpha_1,\alpha_2,...,\alpha_k$ has occurred, whereas the
action $\alpha_1$ has \emph{actually} occurred. For example,
$L_b(?\varphi,?\psi)$ is an action model which says:
\begin{quote}
\emph{$\varphi$ is announced and agent $b$ is suspicious whether
he learns $\varphi$ or learns $\psi$.}
\end{quote}
One may compare the above learning program with the action
$L_b(!?\varphi\cup?\psi)$ in dynamic epistemic logic, and observe
that for all $S5$ epistemic state $(M,s)$, we have that
$(M,s)\times(N_{L_b(?\varphi,?\psi)},t_{L_b(?\varphi,?\psi)})$ is
bisimilar to $(M,s)[[L_b(!?\varphi\cup?\psi)]]$. However, note
that the learning program $L_b(L_b(?\varphi,?\psi))$ is equal to
the learning program $L_b(?\varphi)$, whereas, in dynamic
epistemic logic, the action $L_b(L_b(!?\varphi\cup?\psi))$ is
equal to $L_b(!?\varphi\cup?\psi)$. Thus there is a difference
between the notion of learning we consider for learning programs
and the notion of learning considered in dynamic epistemic logic.

Despite of the above arguments, it seems possible to translate a
class of action terms, say $\alpha$, in concurrent dynamic
epistemic logic to a recursive learning program
$\mathrm{tr}(\alpha)$, such that for all $S5$ epistemic state
$(M,s)$, we have that
$(M,s)\times(N_{\mathrm{tr}(\alpha)},t_{\mathrm{tr}(\alpha)})$ is
bisimilar to $(M,s)[[\alpha]]$.

Another way to represent information change is via the notion of
epistemic program introduced in~\cite{kn:bal}. Whereby the notion
of \emph{action signature} is introduced and by adding this
notion to the propositional dynamic logic
$PDL$~\cite{kn:pdl2,kn:pdl}, a logical language is obtained to
represent information change. However in this setting, no learning
operator is considered, and the information change is represented
through \emph{action signature}, \emph{alternative},
\emph{sequential}, and \emph{iteration} compositions. We focus on
different kinds of learning; as the primitive notion of
information change is learning something by agents.

Another  work related to ours is~\cite{kn:pen} where the epistemic
programs are discussed by adding a parallel composition operator
to non-deterministic sum and sequential composition.

We showed that all finite $K45$ action models can be described by
recursive learning programs. It is also announced in~\cite{kn:BV}
that every $S5$ action model can be described as a concurrent
epistemic action.

 We
showed that $K45$ models are models of the belief $KD45$ logic for
applicable formulas. In this way, to preserve the belief
consistency of an agent, the agent is absent in the states that
conflicts his beliefs. A similar work has been done
in~\cite{kn:david}, which assumes that a rational agent rejects
those incoming information which dispute his beliefs.

By introducing   $K45$ models and actions, we may think of a
theory of multi-agent belief revision.  A related work
is~\cite{kn:guil}, which generalize AGM~\cite{kn:agm}, to a
multi-agent belief revision theory.

Our work presents a method to construct $K45$ action models
through some basic constructors. Also in~\cite{kn:wang}, it is
introduced  operators  to compose epistemic models in order to
construct large models by small components representing agents'
partial observational information.
\subsection{Further work}

\subsubsection{A functional Semantics}

As a semantics of epistemic learning programs, we associated a
pointed action model to every basic learning program. We may
propose a \emph{functional semantics} for the basic learning
programs, in the manner that each program is associated to a
\emph{partial} function from epistemic states $Mod$ to $Mod$. In
this semantics, the meaning of learning operator is
\emph{different} form the meaning we discussed in the
introduction. Here, learning in epistemic states $(M,s)$ deals
with two things, a set $U$ of states of $M$. which includes the
actual state $s$, \emph{and} a set of agents $B\subseteq A$ (where
$A$ refers to the set of all agents). Learning with $B\subseteq A$
and $U\subseteq S$ in the epistemic states $(M,s)$ means that:

\begin{quote}
\emph{agents in $B$ become  aware that the actual state is among
the states in $U$, and other agents in $A-B$ believe that nothing
has occurred.}
\end{quote}
Let $\Phi$ be a set of epistemic formulas over a set of atomic
formulas $P$ and a set of agents $A$. To each $\alpha\in
BLP(\Phi)$, we associate a pair $(f_\alpha, U_\alpha)$, where
$f_\alpha: Mod\rightarrow Mod$ is a partial function ,and for each
epistemic state $(M,s)$, $U_\alpha((M,s))$ is a subset of $S'$,
where $f_\alpha((M,s))=(M',s')$ with $M'=\langle S',
(\rightharpoonup'_i)_{i\in A},V'\rangle$.

For a recursive learning program $\mu X.\alpha(X)$, the
associated partial function should satisfy the fixed point
equation, i.e., $f_{\mu X.\alpha(X)}=f_{\alpha}\circ f_{\mu
X.\alpha(X)}$. As our forthcoming work, we aim to study this
functional semantics. It seems to us that  functional semantics
and recursive learning take us beyond the action models, that is,
by functional semantics, epistemic learning programs can encodes
information changes which cannot be encode by action models.

\subsubsection{A Logic for RLP}
We need to provide a proof system for $\mathrm{RLP}$  as it is
done for other approaches, like concurrent dynamic epistemic
logic ~\cite{kn:dit2}, and action models~\cite{kn:bal1,kn:bal}.

\subsubsection{Notions of Learning}
In Introduction, we put forward two meanings for   1.
\emph{pointed action models} (see~\ref{mean1}) and 2.
\emph{learning of an action model} (see~\ref{mean2}). We supposed
that an action model describes what is announced  and what agents
perceive based on their accesses to the resource of announcement.
We also assumed that the learning of an action by a set of agents
is to learn about the way  information change. So our meanings of
\emph{action models} and \emph{learning} refer  to the
\emph{occurrence} of information change.

We may propose two other meanings for pointed action models and
learning of an action model, which refer  to \emph{disability} in
information change. In this way, an action model describes the
disability of agents in hearing or accessing  the resource of
announcement. For example, the new meaning of the pointed action
model $(N_1,s)$ in Figure~1,
\begin{center}\scalebox{.6}{\includegraphics{fig1.jpg}}

Figure~1
\end{center}
is
\begin{quote}
``in the case of announcement of $\varphi$   agent $a$ hears
$\psi$. "  \end{quote} Note that the above meaning does not speak
about what occurs in information change, but it just describes a
disability of   agent $a$. Suppose $\varphi=green$ and
$\psi=blue$.  The new meaning of the pointed action model
$(N_1,s)$  is that   agent $a$ has a color-blindness and if   a
green ball is shown to her then she thinks that she sees a blue
ball. Similarly, the meaning of learning an action changes. The
new meaning is learning about \emph{disability} not about
\emph{occurrence}. The learning of  an action by a set of agents
is to learn about the disabilities  that the agents have.
Figure~22 shows two pointed action models where both refer to
$L_a((N_1,s))$ (agent $a$ learns the pointed action model
$(N_1,s)$), but one considers the \emph{occurrence meaning} and
the other considers the \emph{disability meaning}.

\begin{center}
\scalebox{.6}{\includegraphics{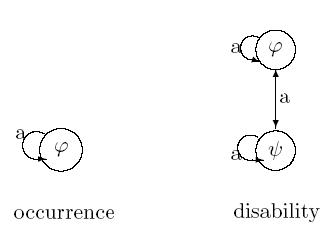}}

Figure~22
\end{center}
In the \emph{occurrence meaning},  agent $a$ learns that $\varphi$
is announced. In the \emph{disability meaning},   agent $a$
becomes aware of her color-blindness, and after this learning, if
she sees
  a blue ball, she is suspicious whether it is green or blue.

\vspace{.2in}

\noindent  \textbf{Acknowledgement}. The authors would like to
thank Hans van Ditmarsch for his careful reading of our
manuscript, and his very helpful comments and suggestions. We
also would like to thank Mehrnoosh Sadrzadeh for her very helpful
comments and suggestions to improve the readability of the
manuscript.

\section{Appendix}
\begin{lemma}\label{lem23} Let $M=\langle S,(\rightharpoonup_a)_{a\in A}, V\rangle$ be a
Kripke model such that for each  $a\in A$, $\rightharpoonup_a$ is
\emph{Euclidean}. Then   for
all $s\in S$, if there is a state $v\in S$ such that
$v\rightharpoonup_a s$ then there exists $t\in S$ such that
$s\rightharpoonup_a t$.
\end{lemma}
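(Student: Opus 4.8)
The plan is to derive the existence of the required successor $t$ directly from the Euclidean property, taking $t$ to be $s$ itself. Fix an arbitrary state $s\in S$ and an agent $a\in A$, and suppose there is some $v\in S$ with $v\rightharpoonup_a s$. The idea is to feed the Euclidean condition the degenerate triple in which both outgoing edges from $v$ land on $s$.

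Concretely, recall that $\rightharpoonup_a$ being Euclidean means: for all $p,q,r\in S$, if $p\rightharpoonup_a q$ and $p\rightharpoonup_a r$ then $q\rightharpoonup_a r$. Instantiate this with $p:=v$, $q:=s$, and $r:=s$. Both hypotheses $v\rightharpoonup_a s$ and $v\rightharpoonup_a s$ hold by assumption, so we conclude $s\rightharpoonup_a s$. Hence $t:=s$ witnesses the claim: there exists $t\in S$ with $s\rightharpoonup_a t$. There is no genuine obstacle here; the only thing to be careful about is the bookkeeping of which variables in the definition of ``Euclidean'' get identified, and in particular that we are allowed to choose the two successors of $v$ to coincide.
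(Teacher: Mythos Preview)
Your proof is correct and matches the paper's own argument: both apply the Euclidean property to the single assumption $v\rightharpoonup_a s$ (taken twice) to conclude $s\rightharpoonup_a s$, which supplies the required successor $t=s$.
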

\begin{proof}
Let $S$ be a set of states and $\rightharpoonup \subseteq S\times
S$ be an \emph{Euclidean} relation. Suppose $s, v\in S$ are
arbitrary and $v\rightharpoonup s$. By \emph{Euclidean} property,
we derive $s\rightharpoonup s$, and we are done.
\end{proof}

\noindent \begin{proof}\textbf{\ref{ddd}.} Consider $T(N)=\langle
S',(\rightharpoonup'_a)_{a\in A}, pre'\rangle$. Define
$\mathcal{R}\subseteq S\times S'$ as follows. For all $s\in S$ and
$w\in S'$,
\begin{center}
$s\mathcal{R}w $ if and only if $\Pi(w)=s$.
\end{center}
We show that $\mathcal{R}$ is a bisimulation relation. Suppose
all different the maximal closed connected $S5$ submodels of $N$
are $M^1,M^2,...$ and $M^k$. Assume $s\mathcal{R}w $. Then
$w=(s,i)$, for some $1\leq i\leq k$.

\begin{itemize}
\item[-] Forth. Let $s\rightharpoonup_a t$. Either $a\in B^i$ or
$a\not \in B^i$. In the first case, since $M^i$ is closed, we have
$t\in S^i$ and thus $(s,i)\rightharpoonup_a' (t,i)$, and since $t
\mathcal{R} (t,i)$, we are done. In the second case, since $N$ is
a $K45$ model, and $s\rightharpoonup_a t$, we have
$t\rightharpoonup_a t$, by Lemma~\ref{lem23}. Therefore there
exists a maximal closed connected $S5$ submodel of $N$, say $M^j$,
such that $t\in S^j$ and $a\in B^j$. By Definition~\ref{closef},
$(s,i)\rightharpoonup_a' (t,j)$, and since $t \mathcal{R} (t,j)$
we are done.

\item[-] Back. Suppose $(s,i)\rightharpoonup_a' (t,j)$. Then by
Definition~\ref{closef}, $s\rightharpoonup_a t$, and we are done.

\item[-] Pre. It is straightforward.
\end{itemize}
\end{proof}

\noindent \begin{proof}\textbf{\ref{prtree1}.}
The proof is by induction on the structure of basic learning
programs.

First of all, note that for each epistemic formula $\varphi$, the
graph of the  action model $(N_{?\varphi},s_{?\varphi})$ is a
tree.

Let $\alpha$ be a basic learning program and its graph
$G(N_\alpha)$ be a tree. We show that the graph
$G(N_{\psi|_B\alpha})$ is a tree, for any arbitrary formula $\psi$
and $B\subseteq group(\alpha)$. The maximal closed connected $S5$
submodels of the action model $N_{\psi|_B\alpha}$ are all the
maximal closed connected $S5$ submodels of $N_{\alpha}$, say
$M^1,M^2,...,M^k$, and the maximal closed connected $S5$ submodel
containing the state $s_{\psi|_B\alpha}$, which is
$M^0=\langle\{s_{\psi|_B\alpha}\},(\rightharpoonup^0_a)_{a\in
\emptyset}, pre^0(s_{\psi|_B\alpha})=\psi\rangle$. One may check
that
\begin{itemize}
\item[1.] for all $i,j\geq 1$, the edges between two nodes $M^i, M^j$
in the graph $G(N_{\psi|_B\alpha})$ are the same edges in the
graph $G(N_\alpha)$,

\item[2.] for all $i\geq 1$, there is no directed edge from $M^i$
to $M^0$, as group of $M^0$ is empty,

\item[3.] for all $i\geq 1$, $M^0\rightarrowtail M^i$ if and only
if there exists $t\in S^i$, $a\in B^i\cap B$, such that
$s_\alpha\rightharpoonup_a t$ in the model $N_\alpha$.
\end{itemize}
Hence, if the graph $G(N_\alpha)$ has no loop, the graph of
$G(N_{\psi|_B\alpha})$ would have no loop as well.

Let $\alpha_1$ and $\alpha_2$ be two basic learning programs such
that $group(\alpha_1)\cap group(\alpha_2)=\emptyset$, and
$pre(\alpha_1)=pre(\alpha_2)$ and $G(N_{\alpha_1})$ and
$G(N_{\alpha_2})$ are trees. Then the graph
$G(N_{\alpha_1\cap\alpha_2})$ is a tree for the following reasons.
The maximal closed connected $S5$ submodels of
$N_{\alpha_1\cap\alpha_2}$ consist of all the maximal closed
connected $S5$ submodels of $N_{\alpha_1}$, and all the maximal
closed connected $S5$ submodels of $N_{\alpha_1}$, and the
maximal closed connected $S5$ submodel containing the state
$s_{\alpha_1\cap\alpha_2}$, which is
$M^0=\langle\{s_{\alpha_1\cap\alpha_2}\},(\rightharpoonup^0_a)_{a\in
\emptyset}, pre^0(s_{\alpha_1\cap\alpha_2})=pre(\alpha_1)\rangle$.
One may check that, since $group(\alpha_1)\cap
group(\alpha_2)=\emptyset$, there is no edge between the nodes of
the subtrees $G(N_{\alpha_1})$ and $G(N_{\alpha_2})$. So if the
graphs $G(N_{\alpha_1})$ and $G(N_{\alpha_2})$ have no loop, the
graph of $G(N_{\alpha_1\cap\alpha_2})$ would have no loop as well.

Assume $\alpha_1$, $\alpha_2$, ... and $\alpha_m$ are basic
learning programs such that their graphs have no loop. The the
maximal closed connected $S5$ submodels of
$N_{L_B(\alpha_1,\alpha_2,...,\alpha_m)}$ are the followings:

\begin{itemize}
\item[1.] the $S5$ model $M_0$ consists of $m$ states
$(s_{\alpha_1},1), (s_{\alpha_2},1),...,(s_{\alpha_m},1)$, with
accessibility relations produced by agent-bisimilarity of group
$B$~(see Definition~\ref{defsem}).

\item[2.] all the maximal closed connected $S5$ submodels of
$N_{\alpha_1}$, $N_{\alpha_2}$, ..., $N_{\alpha_m}$.

\end{itemize}
The node $M_0$ is the root of the graph
$G(N_{L_B(\alpha_1,\alpha_2,...,\alpha_m)})$, and all the graphs
$G(N_{\alpha_1})$, $G(N_{\alpha_2})$, ..., $G(N_{\alpha_m})$ are
disjoint subgraphs of
$G(N_{L_B(\alpha_1,\alpha_2,...,\alpha_m)})$, such that the root
may be connected to them. Hence if $G(N_{\alpha_1})$,
$G(N_{\alpha_2})$, ..., $G(N_{\alpha_m})$ are trees, then
$G(N_{L_B(\alpha_1,\alpha_2,...,\alpha_m)})$ is a tree.
\end{proof}

\noindent \begin{proof}\textbf{\ref{s5prop}.}
Suppose $(N,s_0)$ is an $S5$ pointed action model. Let $N=\langle
S,(\rightharpoonup_a)_{a\in A}, pre\rangle$, where
$S=\{s_0,s_1,...,s_k\}$, and $A=\{a_1,a_2,...,a_m\}$. As the
model is $S5$, all accessibility relations are equivalence
relations. For each agent $a_i$, let $P_i=\{D^1_i,
D^2_i,...,D^{n_i}_i\}$ be the equivalence classes of the relation
$\rightharpoonup_{a_i}$, which partitions the set $S$. Consider
$n_i$ epistemic formulas $\psi_{i,1},\psi_{i,2},...,\psi_{i,n_i}$,
where none of them are $KD45$ equivalent to each other. For each
$s_j$, $0\leq j\leq k$, consider the basic learning program
$\alpha_j$

\begin{center}
$\alpha_j=\beta_{j,1}\cap\beta_{j,2},...,\cap\beta_{j,m}$,
\end{center}
where each $\beta_{j,l}$ is a basic learning program defined as
follows
\begin{center}
$\beta_{j,l}=pre(s_j)|_{a_l}L_{a_l}(?\psi_{l,h})$,
\end{center}
where $s_j\in D^h_l$.

The action model associated to the basic learning program
$L_A(\alpha_0,\alpha_1,\alpha_2,...,\alpha_k)$ is $(N,s_0)$. Since
for each agent $a$, the action models of two programs $\alpha_j$
and $\alpha_i$ are $a$-bisimilar if and only if $s_j$ and $s_i$
are in the same equivalence classes induced  by the relation
$\rightharpoonup_a$.
\end{proof}

\noindent \begin{proof}\textbf{\ref{prtree2}.} By induction on the height of the tree. Let $h_N$ be
the height of the tree $G(N)$. If $h_N=1$, the action model $N$ is
an $S5$ model and by the Proposition~\ref{s5prop}, $(N,s)$ is a
basic learning action. Suppose that for all $h< k$, if the graph
of a $K45$ pointed model $(M,t)$ is a tree with height $h$, then
$(M,t)$ is a basic learning action. Assume $h_N=k$. Let
$M^0,M^1,...,M^{k-1}$ are all maximal connected closed submodels
of $N$, and $s$ is an state of $M^0$. We consider $M^0$ as the
root of the tree, and delete all nodes which are not reachable
from $M^0$. That is because we want to state a program for the
\emph{pointed} action model $(N,s)$, and by deleting those maximal
connected closed submodels which are not reachable from $s$, we do
not loose anything up to bisimilarity. Let $M^0=\langle
S^0,(\rightharpoonup^0_a)_{a\in B^0}, pre^0\rangle$, where
$S^0=\{s^0_1,s^0_2,...,s^0_n\}$, $B^0=\{a^0_1,a^0_2,...,a^0_m\}$
and $s^0_1=s$.  As $M^0$ is an $S5$ action model, by using
Proposition~\ref{s5prop}, there is a basic learning program
$L_{B^0}(\alpha_1,\alpha_2,...,\alpha_{n})$, such that its action
model is $(M^0,s^0_1)$, and each $\alpha_i$ corresponds to a state
$s^0_i$~(see the proof of Proposition~\ref{s5prop}). For each
$s^0_i$, let $\{b^i_1,b^i_2,...,
b^i_{l_i}\}=group_{T(N)}(s^0_i)-B^0$, where $group_{T(N)}(s^0_i)$
is the group of agents of the state $s^0_i$ in the model $T(N)$.
For each $1\leq j\leq l_i$, if $s^0_i\rightharpoonup_{b^i_j}
t_{i,j}$, for some state $t_{i,j}$ of the model $T(N)$, then as
$M_0$ is the root and is not accessible from $t_j$, the graph of
the pointed action model $(T(N),t_{i,j})$ is a tree with height
less than $k$, and by the induction hypothesis, there is a basic
leaning program $\beta_{i,j}$ such that its action model is
$(T(N),t_j)$. For each $s^0_i$, let

\begin{center}
$\gamma_i=\alpha_i\cap (pre(s^0_i)|_{b^i_1}\beta_{i,1})\cap
(pre(s^0_i)|_{b^i_2}\beta_{i,2})\cap...\cap
(pre(s^0_i)|_{b^i_{l_i}}\beta_{i,l_i})$.
\end{center}
The program $L_{B^0}(\gamma_1,\gamma_2,...,\gamma_{n})$ is the
desired program, that is, its associated action model is $(N,s)$.
\end{proof}

\noindent \begin{proof}\textbf{\ref{main}.}
Let $(N,s_0)$ be a $K45$ pointed action model. Consider the
action model $T'(N)$. One may check that the followings hold true.
Suppose $M^a_1$ and $M^b_2$ are two \emph{different} components in
$T'(N)$, then
\begin{itemize}
\item[1.] if $a=b$, as both components are closed $S5$ models,
there is no accessibility relation for agent $a$ between the two
components,

\item[2.] if for some state $s\in M^a_1$ and $t\in M^b_2$, we have
$s\rightharpoonup_b t$, then for all $v\in M^b_2$, we have
$s\rightharpoonup_b v$, by transitivity and connectedness of
$M^b_2$.
\end{itemize}

Let $\mathrm{n}_0,\mathrm{n}_1,\mathrm{n}_2,...,\mathrm{n}_k$ be
all different components of the model $N$. Also suppose
$\mathrm{n}_0$ is a component in which the actual state $s_0$
appears. The model $T'(N)$ is a directed labeled graph in which
the nodes are
$\mathrm{n}_0,\mathrm{n}_1,\mathrm{n}_2,...,\mathrm{n}_k$, and
the edges are agents in $A$. To each node $\mathrm{n}_i$, we
correspond a variable $X_i$.

If the graph is a tree, then we are done and then we can construct
a basic learning program describing $(N,s_0)$. If the graph is
not a tree, we unwind it to an infinite tree with the root $n_0$.
\begin{center}
\scalebox{.6}{\includegraphics{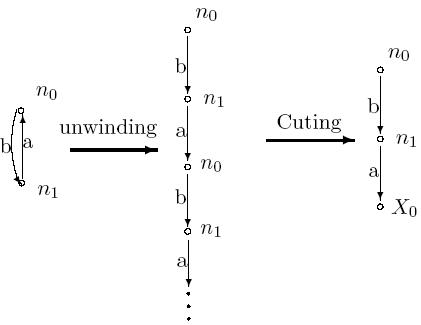}}

Figure~23
\end{center}
In the unwound infinite tree, there could be infinite nodes with
the same name, say $\mathrm{n}_i$. For all nodes $\mathrm{w}$ of
the unwound tree, if $w$ is a node with name $\mathrm{n}_i$ (for
some $i$) and exactly one of its parents has the same name
$\mathrm{n}_i$, then we cut the subtree rooted from $w$ and
change the name of $\mathrm{w}$ from $\mathrm{n}_i$ to variable
$X_i$. In this way, a \emph{finite tree} $T''(N)$ is obtained.

Now we are ready to construct the desired program. We start from
down to the top of the finite tree $T''(N)$.

\begin{itemize}
\item[1.] First note that each leaf of the tree is either a variable
or an $a$-component. If it is  an $a$-component, then we
associate to that leaf, the program
\begin{center}
$L_a(?pre(v_1),?pre(v_2),...,?pre(v_m)$,
\end{center}
where $v_1,v_2,...,v_m$ are all the states of the component. We
note that as the $a$-component is a connected $S5$ action model,
it is associated to the program
\begin{center}
$L_a(?pre(v_1),?pre(v_2),...,pre(v_m)$.
\end{center}
\item[2.] Suppose that $\mathrm{n}_j$ is the name of a node
$\mathrm{w}$ in $T''(N)$, which either all of its children are
corresponded to a variable or a program.  Two cases are possible:
\begin{itemize}
\item Case 1. Among the children of $\mathrm{w}$ there is no node
corresponding to the variable with the same index $j$, that is
$X_j$.

For this case, suppose $\mathrm{n}_j$ refers to a $b$-component
with the states $v_1,v_2,...,v_m$. For each state $v_l$, and each
agent $a\in A$, if there is a directed edge with label $a$
starting from the state $v_l$ to a children of $\mathrm{w}$, say
$\mathrm{u}$, in the tree $T''(N)$, consider
\begin{center}
$pre(v_l)|_a P_{a,l}$
\end{center} where $P_{a,l}$ is a program or variable corresponding to the
node $\mathrm{u}$. Then we associate to the node $\mathrm{w}$, the
program
\begin{center}
$L_b(\bigcap_{a\in A}pre(v_1)|_a P_{a,1},
\bigcap_{a\in A}pre(v_2)|_a P_{a,2},...,\bigcap_{a\in
A}pre(v_m)|_a P_{a,m})$
\end{center}

\item Case 2. Among the children of $\mathrm{n}_j$ there are some
nodes corresponding to the variable with the same index $j$, that
is $X_j$. For this case, we do exactly the same as we did in the
first case, thus obtaining the program
\begin{center}
$L_b(\bigcap_{a\in A}pre(v_1)|_a P_{a,1},
\bigcap_{a\in A}pre(v_2)|_a P_{a,2},...,\bigcap_{a\in
A}pre(v_m)|_a P_{a,m})$.
\end{center}
Then we associate the following program to the node
\begin{center}
$\mu X_j.L_b(\bigcap_{a\in A}pre(v_1)|_a P_{a,1}, \bigcap_{a\in
A}pre(v_2)|_a P_{a,2},...,\bigcap_{a\in A}pre(v_m)|_a P_{a,m})$
\end{center}
\end{itemize}
\end{itemize}
The program corresponding to the root of the tree $T''(N)$ is a
recursive learning program which describes pointed action model
$(N,s_0)$.
\end{proof}

\noindent \begin{proof}\textbf{\ref{main2}.}
In Theorem~\ref{prtree1}, it is proved that the graph of the
action model of a basic learning program is a tree. So none of
the operations: alternative learning, concurrent learning, wrong
learning, produces any loops in the graph of a learning program.
It is easily seen by Definition~\ref{fxp} that, the only operation
that makes loops in the semantics of a learning program is the
recursive learning operator. Therefore, if in a learning program,
there exist $k$ times of dependent use of the recursive operator
$\mu$, then there exists at most a $k$-nested loop in its graph.
That is, for each $k\in \mathbb{N}$, the graph of an action model
associated to a program in $\mathrm{kRLP}$ has at most $k$-nested
loops.

\noindent For each $k>0$, we introduce a learning program
$\alpha^k$, such that its associated action model belongs to
$\mathrm{kRLP}$ but not $\mathrm{(k-1)RLP}$.

\begin{itemize}
\item $k=1$. Let $\alpha^1=\mu X.
L_b(\varphi|_aL_a(\psi|_bL_b(X)))$, where $\varphi$ is not
logically equivalent to $\psi$. The associated action model of the
learning program $\alpha^1$ is the action model $(N_3,s)$ in
Figure~3. Since $\varphi$ and $\psi$ are not logically equivalent,
the two states $(N_3,s)$ and $(N_3,t)$ (see Figure~3) are not
bisimilar.  If there exists a program $\beta$ without any
recursive operator that its associated action model is bisimlar to
$(N_3,s)$, then the action model $N_3$ would be bisimilar to a
finite action model $M'\in FAct$, such that its graph is a tree.
Suppose $R$ is a bisimilarity relation between $N_3$ and $M'$, and
$sRs'$. Because of bisimilarity, since $s\rightharpoonup_a t$,
there exists an state $t'$ in model $M'$, such that $tRt'$ and
$s'\rightharpoonup'_a t'$. Again, since $t\rightharpoonup_b s$,
there exists an state $s''$ in $M'$, such that $s R s''$ and
$t'\rightharpoonup'_b s''$. The model $M'$ is a tree, so we have
$s''\neq s'$, and as $s$ and $t$ are not bisimilar, we have
$s''\neq t'$. Again, by bisimilarity, there exists an state $t''$
in $M'$, such that $s''\rightharpoonup'_a t''$ and $t R t''$. The
new state $t''$ is different from other states of $M'$, since
$M'$ has no loop. In this way, $M'$ is an infinite model, and we
derive a contradiction.

\item $k=2$. The above argument can be done for $k=2$, by
considering the associated action model of the program
$\alpha^2=\mu X.L_a(\varphi|_b\mu Y.
L_b(\psi|_aX\cap\psi|_cL_c(\theta|_bY)))$ (see Figure~17), where
none of the formulas $\varphi$, $\psi$ and $\theta$ are logically
equivalent. If there is a program $\beta$ with at most one use of
recursive operation, then the action model in Figure~17 (which has
a 2-nested loop) would be bisimilar to an action model $M$, that
its graph has just one loop. This can easily be shown, since none
of the states of the action model in Figure~18 are bisimilar to
each other, so the action model $M'$ cannot be finite.

\end{itemize}
So for any arbitrary $k$, we can construct an action model having
one $k$-nested loop, where none of its nodes are bisimilar to each
other. Then this action model is in $\mathrm{kRLP}$ but not
$\mathrm{(k-1)RLP}$.
\end{proof}


\begin{thebibliography}{10}
\bibitem{kn:agm} C. E.  Alchourron, P. Gardenfors, and D. Makinson
{\em  On the logic of theory change: Partial meet contraction and
revision functions}, J. Symb. Log., 50(2):510-530, 1985.

\bibitem{kn:guil} G. Aucher,
{\em Internal models and private multi-agent belief revision},
Proceedings of 7th Int. Conf. on Autonomous Agents and Multi-agent
systems (AAMAS), 2008.

\bibitem{kn:pdl} P. Balbiani, D. Vakarelov,
{\em Iteration-free PDL with intersection of programs: a complete
axiomatization}, Fundamenta Informaticae, Vol 45, Issue 3, 2001.

\bibitem{kn:bal1} A. Baltag, L. Moss, and S. Solecki,
{\em The logic of public announcements, common knowledge and
private Suspicions}, Proceeding TARK 1998, 43-56, Morgan Kaufmann
Publishers, Los Altos, Many update versions.

\bibitem{kn:bal} A. Baltag and L. Moss,
{\em Logics for epistemic programs}, synthese, 139: 165-224, 2004,
Knowledge, Rationality $\&$ Action 1-60.

\bibitem{kn:BV} A. Baltag and H. van Ditmarsch,
{\em Relation between two dynamic epistemic logic}, (abstract),
Proceedings of AAl, 2006.


\bibitem{kn:dit1} H. van Ditmarsch,
{\bf Knowledge Games}, PhD thesis, 2000, ILLC Dissertation Series
DS-2000-06.

\bibitem{kn:dit2} H. van Ditmarsch, W. van der Hoek, and B. Kooi,
{\em Concurrent dynamic epistemic logic}, in V. Hendricks, K.
Jogensen, and S. Pedersen, editors, Knowledge Contributors, page
45-82, Dordercht, 2003, Kluwer Academic Publishers, Synthese
Library Volume 322.

\bibitem{kn:dit3} H. van Ditmarsch, W. van der Hoek, and B. Kooi,
{\bf Dynamic Epistemic Logic}, Springer, 2008.





\bibitem{kn:pen} P. Economou, {\em Sharing beliefs about actions: a
parallel composition operator for epistemic programs}, Summer
school on logic, language and inforamtion, 2005.

\bibitem{kn:wang} J. van Eijck, F. Sietsma, Y. Wang, {\em
Composing Models}, journal of Applied Non-classical Logics,
21(3-4), 2011.

\bibitem{kn:reas} R. Fagin, J. Halpern, Y. Moses, and M. Vardi,
{\bf Reasoning about Knowledge}, MIT Press, Cambridge MA, 1995.

\bibitem{kn:pdl2} D. Harel,
{\em Dynamic Logic}, D. Gabbay, F. Guenthner, Eds., Handbook of
Philosophical logic, vol.II, 1984.



\bibitem{kn:hintk} J. Hintikka, {\bf Knowledge and Belief,
 An Introduction to the Logic of the Two
Notions.} Cornell University Press, Ithaca, New York, 1962.
Republished in 2005 by King�s College, London.



\bibitem{kn:plaz} J. Plaza,
{\em Logics for public communications}, In M. Emrich, M. Pfeifer,
M. Hadzikadic, and Z. Ras, editors, proceedings of the 4th
International  Symposium on Methodologies for Intelligent Systems,
pages 201--216, 1989.

\bibitem{kn:david} D. Steiner,
{\em A system for consistency preserving belief change}, In
Proceedings of Rationality and Knowledge, ed. by S. Artemov and R.
Parikh, pp. 133-144, 2006.
\end{thebibliography}
\end{document}